\DeclarePairedDelimiterX\set[1]\lbrace\rbrace{#1}
\DeclareMathOperator*{\E}{\mathbb{E}}
\DeclareMathOperator*{\R}{\mathbb{R}}
\newcommand{\ind}[1]{\mathbf{\mathbb{1}}_{#1}}
\DeclareMathOperator*{\argmax}{argmax}
\DeclareMathOperator*{\argmin}{argmin}
\newtheorem{theorem}{Theorem}
\newtheorem{definition}{Definition}
\newtheorem{lemma}{Lemma}
\newtheorem{proposition}{Proposition}
\newtheorem{corollary}{Corollary}
\title{Reputation-based Persuasion Platforms\footnote{The work by Moshe Tennenholtz and Omer Madmon was supported by funding from the European Research Council (ERC) under the European Union’s Horizon 2020 research and innovation programme (grant agreement 740435). Itai Arieli gratefully acknowledges support from the Israel Science Foundation (grant agreement 2029464). We would like to thank the anonymous reviewers for their helpful comments. An extended abstract of an earlier version of this paper appeared at the proceedings of the 16th International Symposium on Algorithmic Game Theory (SAGT 23').}
}
\author{
    Itai Arieli\thanks{%
    Faculty of Data and Decision Sciences, Technion - Israel Institute of Technology,
    iarieli@technion.ac.il},
    Omer Madmon\thanks{%
    Faculty of Data and Decision Sciences, Technion - Israel Institute of Technology,
    omermadmon@campus.technion.ac.il},
    Moshe Tennenholtz\thanks{%
    Faculty of Data and Decision Sciences, Technion - Israel Institute of Technology,
    moshet@technion.ac.il}
}
\begin{document}
\maketitle

\begin{abstract}
In this paper, we introduce a two-stage Bayesian persuasion model in which a third-party platform controls the information available to the sender about users’ preferences. We aim to characterize the optimal information disclosure policy of the platform, which maximizes average user utility, under the assumption that the sender also follows its own optimal policy. We show that this problem can be reduced to a model of market segmentation, in which probabilities are mapped into valuations. We then introduce a repeated variation of the persuasion platform problem in which myopic users arrive sequentially. In this setting, the platform controls the sender's information about users and maintains a reputation for the sender, punishing it if it fails to act truthfully on a certain subset of signals. We provide a characterization of the optimal platform policy in the reputation-based setting, which is then used to simplify the optimization problem of the platform.

\end{abstract}

\textbf{Keywords.} Game Theory, Information Design, Bayesian Persuasion, Reputation Systems, Recommendation Systems.

\textbf{JEL Classification.} C72, D82.

\newpage

\section{Introduction}
\label{sec:intro}

\paragraph{Persuasion platforms} Bayesian persuasion refers to a situation where a sender (such as a seller) attempts to influence the decision of a receiver (such as a buyer) by presenting them with information. One classic example is a seller trying to sell a product of uncertain quality to a buyer. The seller may have some aggregate knowledge about the preferences and likelihood of purchasing the product for a group of buyers, but may not be able to distinguish among individual buyers. In contrast, other parties, such as online selling platforms like Amazon and eBay, may have access to more specific information about each buyer and their likelihood of purchasing the product. These platforms can use this information to reveal relevant characteristics of individual buyers to the seller, improving the efficiency of the persuasion process. Our work aims to study this double information disclosure mechanism in a persuasion setting.

In particular, we study this model under the assumption that the receiver (or the utility of the receiver) is drawn from a given commonly known distribution and in parallel independently the quality of the sender's product is determined. The quality of the product is known only to the sender and the user's utility is known only to the platform. The platform and the sender commit to their information revelation strategies at the outset. In the first stage, the platform reveals information to the sender about the receiver, and thereafter the sender reveals information to the receiver as a function of the information it received from the platform. We study this problem under the assumption that the platform tries to maximize the expected utility of the receivers.

Unlike the standard Bayesian persuasion setting, in our model, we have a two-stage process of information revelation. We study this model in two different settings: a one-shot setting and a repeated setting.  In the repeated setting, which is inspired by reputation considerations, reciprocity and punishments play a significant role.

We consider the standard product adoption setting with a binary set of states of the world (e.g., high-quality product and low-quality product) and a binary set of receiver actions (e.g., buy and do not buy the product). The sender only knows a prior distribution of user types, and the platform knows the realized user type. After the sender receives information from the platform it forms a posterior on the user type. Then, the state of the world (product quality) is drawn from a commonly known prior distribution, and is observed only by the sender. The sender then recommends according to the committed persuasion strategy and the receiver forms a posterior probability on the product quality and decides whether to buy or not. Lastly, the sender and receiver obtain their payoffs based on the receiver's action and the product quality. 
\footnote{Note that the terms 'receiver' and 'user' are used interchangeably.}

Our problem, then, is to find the subgame perfect equilibrium that is optimal for the platform, i.e., an information disclosure policy for the platform that maximizes the users' utility. We start with analyzing the one-shot setting. In this setting, we demonstrate how our model can be reduced to the seemingly unrelated market segmentation model proposed by \cite{bergemann2015}. In their model, a monopolistic producer determines an optimal price of a product in a given market, which is a distribution over consumers’ valuations.  \cite{bergemann2015}. provide an efficient algorithm for finding a segmentation of the initial market into multiple markets, such that the consumers’ utility is maximized assuming the producer determines an optimal price at each segment separately. In their terminology, segmentation is essentially a distribution over markets, such that its expectation is the initial market. The reduction from the persuasion platform problem to the market segmentation problem is done by mapping the persuasion thresholds of the users (i.e., the probabilities of lying to users regarding the product quality) into valuations in the market segmentation problem. We also show that under some conditions, the equivalence between the two problems can be extended beyond the case of binary state space.

\paragraph{Reputation-based persuasion platforms} Next, we introduce a repeated variation of the persuasion platform problem, in which myopic users arrive sequentially. The one-shot interaction between the platform, the sender, and the users is similar to before, except now the platform can not only control the sender’s user information, but it also maintains a reputation state of the sender. That is, the platform requires the sender to act truthfully on some signals and punishes the sender if it does not respect the requirement by permanently suspending the sender from the market and transmitting it from a high reputation to a low reputation, for all subsequent time periods.

At the low reputation state, the sender obtains a known, constant utility that is lower than the monopolistic sender utility, i.e., the utility of the sender when it is provided with no user information from the platform and relies only on the prior user distribution. This punishment utility comes from an outside option, which is less desirable for the sender than the case in which the platform provides no user information at all.

The lower the utility of the low reputation state is, the larger the punishment power the platform has, meaning it can request a more truthful behavior while still being incentive-compatible from the perspective of the sender.

In this reputation-based framework, the platform is assumed to have mechanisms to assess product quality post-sale, through direct observations or indirect indicators like user reviews and return frequencies. For instance, platforms like Amazon and eBay allow buyers to leave reviews and ratings after a purchase. These reviews serve as a direct indicator of the buyer's satisfaction and the product's quality, which the platform and potential buyers use to gauge the reliability of a seller. 
Alternatively, other platforms have return and refund policies that are activated when a product fails to meet the advertised quality. The rate of returns and disputes can be an indirect measure of product quality and seller reliability, which platforms potentially monitor closely.

For the repeated setting, we provide a characterization of an optimal platform policy. This policy relies on the fact that in the high reputation state, it is optimal for the platform to adopt the segmentation which is similar to the one it provided in the one-shot case. We also demonstrate how in low dimensions the problem can be solved using an exhaustive search over the search space, and show how the platform's optimal policy becomes less effective and more discriminative as the sender is less patient.

\paragraph{Our contribution} The contribution of this paper is threefold: First, we define a novel model of persuasion platforms, which captures, for example, the strategic nature of recommendation systems and aims to provide a mechanism to increase users’ welfare on average. Second, we show a reduction from the one-shot persuasion platform to the market segmentation model of \cite{bergemann2015}. This reduction enables solving the persuasion platform problem in efficient runtime. Finally, we propose the reputation-based persuasion platform model and provide a characterization of the optimal platform policy. The characterization makes use of the algorithm of \cite{bergemann2015}, which is used also to solve the one-shot case.

\paragraph{Paper structure} The paper is organized as follows: in section \ref{sec:persuasion-platforms} we introduce and study the one-shot persuasion platform problem. In section \ref{subsec:pp-model} we present the persuasion platform model and the problem of finding an optimal platform policy, in section \ref{subsec:pp-ms} we discuss the market segmentation model of \cite{bergemann2015}, and in section \ref{subsec:pp-optimal-platforms} we show the reduction from the persuasion platform problem to the market segmentation problem. Section \ref{subsec:pp-beyond-two-states} presents several extensions beyond the case of binary state space.

Then, section \ref{sec:reputation-based-pp} discusses the repeated, reputation-based variation of the persuasion platform problem: in section \ref{subsec:rbpp-model} we introduce the model, and section \ref{subsec:rbpp-optimal-platform} deals with the characterization of an optimal platform policy. The proof of the main Theorem appears in section \ref{subsec:rbpp-proof}. A numerical example of approximation of an optimal policy in the reputation-based setting is given in section \ref{subsec:rbpp-numerical}. We then conclude in section \ref{sec:discussion}. Proofs of all technical lemmas are deferred to the appendix.

\section{Related Work}
\label{sec:related-work}

There is a rich literature on markets with asymmetric information. \cite{akerlof} introduced the very first model, which introduced the market for “lemons”, in which sellers have private information regarding the product quality. More recent models of information disclosure include the cheap talk (\citealp{crawford1982strategic, cheap-talk}) and Bayesian persuasion (\citealp{bayesian-persuasion}). In our model, we rely on the Bayesian persuasion framework, which differs from cheap talk by the fact that the sender has commitment abilities in addition to its private information. Our model describes a two-level persuasion process, in which a third-party platform controls the information the sender has about the users. 

There is also literature on multi-level, sequential Bayesian persuasion, e.g. \cite{li2021sequential,arieli2022bayesian, mahzoon2022hierarchical}. In contrast to these works, in our model the information that is revealed to the sender is on the user type, whereas the information that is revealed by the sender is on the product quality. In addition, we also extend the sequential persuasion setting to a repeated, reputation-based setting, in which sequential persuasion games are being played for an infinite number of time periods with myopic users, and characterize the optimal platform policy in terms of both information revelation and punishment strategy.

Signaling schemes are well-studied in various economic settings. For instance, \cite{signalling} have studied optimal signaling schemes in the context of a second-price auction with probabilistic goods, where optimality is measured with respect to the auctioneer's revenue. In contrast, our problem deals with maximizing the users' average utility. \cite{wisdom-of} have also studied a repeated dynamic persuasion model, in which the sender is aiming to persuade the receivers towards exploration in order to maximize social welfare. Notice that while they considered a single level of persuasion, in our model we study a two-level persuasion: from the platform to the sender and from the sender to the receiver.

We rely on the market segmentation model of \cite{bergemann2015} and utilize their algorithm to solve our one-shot persuasion platforms problem and characterize an optimal platform policy in the reputation-based persuasion platforms problem. Several extensions of the market segmentation model were studied recently, including buyers with incomplete information about their own types (\citealp{deb2021multidimensional,kartik2023lemonade}), robustness to the seller's type (\citealp{arieli2024robust}), segmentation under fairness and privacy constraints (\citealp{hidir2021privacy,cohen2022price,strack2023privacy,banerjee2023fair}), and more. The strong relationship between the persuasion platforms problem and the market segmentation problem implies that many of these extensions can be applied to the persuasion platforms problem as well.

In our reputation-based setting the platform also punishes the sender once it is caught lying to a user it should not have lied to according to the platform's request. 
The role of reputation in repeated games with incomplete information is well studied, e.g. \cite{reputation,monitoring}. In our reputation-based setting, we exploit a similar notion of reputation in an information design setting. In particular, \cite{platform-markets} studied the role of reputation in online platform markets, and highlighted empirical evidence for the fact that reputation systems often have a positive effect on online markets, in terms of both stability and efficiency. Related theoretical models of reputation and information design are considered by \cite{best2020persuasion} and \cite{mathevet2022reputation}. They used reputation in a repeated setting in a model that relaxes the commitment power assumption of the sender.

Other work considered dynamic Bayesian Persuasion models, in which a sender and a receiver interact repeatedly, and the state of the world evolves according to a Markovian law (\citealp{optimal-dynamic, markovian-persuasion}). Since in our model the state of the world is drawn independently every time period, our problem has a stationary optimal solution. \cite{markov-persuasion-process} considered a different Markovian setting, in which an informed sender is willing to persuade a stream of myopic receivers to take actions that maximize its cumulative utility. While they focus on the sender's optimal persuasion strategy, in this work we aim to characterize an optimal platform policy that maximizes the average receivers' utility.

\cite{data-collection} dealt with the receiver's welfare by presenting a sender-receiver Bayesian model in which the sender can request additional information from the receiver. While we are also interested in the receiver's welfare, we take a mechanism design perspective rather than allowing additional communication from the receiver to the sender.

\section{Persuasion Platforms}
\label{sec:persuasion-platforms}
\subsection{The Model}
\label{subsec:pp-model}

We denote by $\Theta = \{\theta_1, ... \theta_n\} \subset \R_{+}$ the set of users' types. We assume $0 < \theta_1 < ... < \theta_n$. $x^* \in \Delta(\Theta)$  is the prior user distribution. $\Omega = \{ \omega_0, \omega_1 \}$ is the set of states of the world, i.e. product qualities. We think of $\omega_0$ as the state of the world in which the product is of low quality, and $\omega_1$ is the state of the world in which the product is of high quality. $\mu \in \Delta(\Omega)$ is the prior distribution over product qualities. We identify $\mu = P_{\mu}(\omega_1)$. Let $A = \{ a_0, a_1 \}$ be the set of user actions (and sender recommendations), corresponding to buying ($a_1$) and not buying ($a_0$) the product. We denote the user's action by $a$ and the sender's recommendation by $\tilde{a}$.

A strategy of the platform is a signaling policy $\sigma: \Theta \to \Delta(S)$ for some abstract finite signal realizations space $S$. The platform's strategy can be thought of as a set of conditional distributions over $S$ (conditioned on the user type). Alternatively, a strategy of the platform is a Bayes-plausible distribution of distributions $\sigma \in \Delta(\Delta(\Theta))$. That is, $|\operatorname{supp}(\sigma)| < \infty$ and $\E_{x \sim \sigma} [x] = x^*$. We denote the set of all Bayes-plausible distributions by $\Sigma$.

A strategy of the sender is a recommendation policy conditioned on the signal realization and the state of the world, i.e. $p: \Omega \times S \to \Delta(A)$. We denote $p(\omega, s) = P_{p}(\tilde{a}=a_1 | \omega, s)$ for any $s\in S, \omega \in \Omega$. It is a well-known result that the optimal sender strategy always satisfies $p(\omega_1, s) = 1$ for any $s \in S$. Therefore, the sender's strategy can be solely characterized by its recommendation policy conditioned on the low-quality product, i.e. $p: S \to \Delta(A)$, where $p(s) = P_{p}(\tilde{a}=a_1 | \omega_0, s)$, and $P_{p}(\tilde{a}=a_1 | \omega_1, s)=1$. Note that the domain of $p$ can be defined as $\Delta(\Theta)$ instead of $S$, as each signal realization induces a posterior probability over the user types.

The sender has a utility function of $u_S(a_0) = 0, u_S(a_1) = 1$, and a user of type $\theta$ has a utility function of $u_R^{\theta}(a_0, \omega_0) = u_R^{\theta}(a_0, \omega_1) = 0, u_R^{\theta}(a_1, \omega_0) = -1, u_R^{\theta}(a_1, \omega_1) = \theta$. The platform's utility is the users' average utility (w.r.t the distribution of user distributions induced by its signaling policy $\sigma$).
The interaction between the three entities (platform, sender, and user) is then defined as follows:
\begin{enumerate}
    \item The platform commits to a strategy $\sigma$.
    \item The sender observes $\sigma$, and commits to its own strategy $p$.
    \item A user $\theta \sim x^*$ is drawn, and its type is visible only to the platform.
    \item The platform sends a signal $s \in S$ to the sender, according to the committed policy $\sigma$. That is, $s \sim \sigma(\theta)$.
    \item The state of the world $\omega \sim \mu$ is drawn, and it is visible to the sender only.
    \item If $\omega = \omega_1$, the sender sends a recommendation $\tilde{a} = a_1$ to the user. Otherwise, the sender sends a recommendation according to the committed policy $p$. That is, $\tilde{a} \sim p(s)$.
    \item The user observes $\tilde{a}$ and plays $a = a_1$ if and only if $\E_{\omega \sim \tilde{\mu}}[u_R^{\theta}(a_1, \omega)] \ge \E_{\omega \sim \tilde{\mu}}[u_R^{\theta}(a_0, \omega)]$, where $\tilde{\mu} \in \Delta(\Omega)$ is the posterior computed by the user based on $\mu$ and $\tilde{a}$ (otherwise it plays $a = a_0$). Simple algebra reveals that the condition holds for a user of type $\theta$ if and only if $p(s) \le \frac{\mu}{1 - \mu} \theta$. We refer to the quota $\frac{\mu}{1 - \mu} \theta$ as the persuasion threshold of a $\theta$-type user, and denote it by $\tau_{\theta}$.
\end{enumerate}

Figure \ref{gametree} visualizes the timing of the Platform, Sender, and User interaction. A user of type $\theta$ then plays according to the best-response mapping: 
\begin{center}
    $BR_{\theta}(\tilde{a}, p) = \begin{cases} \tilde{a} & p \le \tau_\theta \\ a_0 & p > \tau_\theta \end{cases}$
\end{center}
where $\tilde{a}$ is the received recommendation and $p = P(\tilde{a} = a_1 | \omega_0)$; The sender is maximizing: 
\begin{align}
U_S(\sigma, p) &= \E_{x \sim \sigma} [U_S(p;x)] \nonumber \\ 
&= \E_{x \sim \sigma} \E_{\theta \sim x} \E_{\omega \sim \mu} \E_{\tilde{a} \sim p(\omega, x)} [u_S(BR_{\theta}(\tilde{a}, p(\omega, x)))] \nonumber
\end{align}
and the platform is maximizing: 
\begin{align}
    U_P(\sigma, p) &= \E_{x \sim \sigma} [U_P(p;x)] \nonumber \\
    &= \E_{x \sim \sigma} \E_{\theta \sim x} [U_R^{\theta}(p;x)] = \E_{x \sim \sigma} \E_{\theta \sim x} \E_{\omega \sim \mu} \E_{\tilde{a} \sim p(\omega, x)} [u_R^{\theta}(BR_{\theta}(\tilde{a}, p(\omega, x)), \omega)] \nonumber
\end{align}


Note that we use the notation $U_S(p;x)$ for the sender's expected utility from playing $p$ conditional on the user posterior distribution $x$, and denote by $U_S(\sigma,p)$ the expected utility when the platform provides user information according to policy $\sigma$, which is taking the expectation over all $x \sim \sigma$ (and similarly we use $U_P(p;x)$ and $U_P(\sigma,p)$, to denote the expected utility of the platform). $U_R^{\theta}(p;x)$ is the expected utility of a $\theta-$type user when it responds optimally, as a function of the sender's strategy $p$ and the user distribution $x$. Note that by definition, the sender and platform utilities can be written explicitly as follows:

\begin{align}
U_S(\sigma, p) = \E_{x \sim \sigma} [U_S(p;x)] = \sum_{x \in \operatorname{supp}(\sigma)} \sigma(x) \sum_{j=1}^n x_j \ind{p(x) \le \tau_{\theta_j}} (\mu + (1 - \mu) p(x)) \nonumber \\
U_P(\sigma, p) = \E_{x \sim \sigma} [U_P(p;x)] = \sum_{x \in \operatorname{supp}(\sigma)} \sigma(x) \sum_{j=1}^n x_j \ind{p(x) \le \tau_{\theta_j}} (\mu \theta_j - (1 - \mu) p(x)) \nonumber
\end{align}
where $\ind{a \le b} = 1$ if $a \le b$ and $0$ otherwise. 
It is straightforward to see that any optimal sender strategy $p^*$ must satisfy that for every $x \in \Delta(\Theta)$, $p^*(x) \in \{\tau_\theta\}_{\theta \in \Theta}$.

    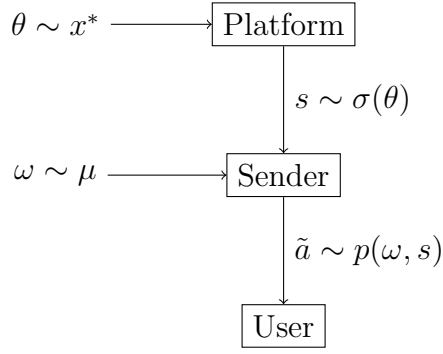
\begin{figure}
        \centering
        \begin{tikzpicture}[node distance=2cm]
            \node (platform) [rectangle, draw] {Platform};
            \node (sender) [rectangle, draw, below of=platform] {Sender};
            \node (user) [rectangle, draw, below of=sender] {User};
            \node (type) [left of=platform, xshift=-1cm] {$\theta \sim x^*$};
            \node (state) [left of=sender, xshift=-1cm] {$\omega \sim \mu$};

            \draw [->] (platform) -- (sender) node[midway, right] {$s \sim \sigma(\theta)$};
            \draw [->] (sender) -- (user) node[midway, right] {$\tilde{a} \sim p(\omega, s)$};
            \draw [->] (type) -- (platform);
            \draw [->] (state) -- (sender);
        \end{tikzpicture}
        \caption{The interaction between the Platform, the Sender, and the User.}
    \label{gametree}
    \end{figure}

\subsection{Market Segmentation}
\label{subsec:pp-ms}

We now describe a model for market segmentation, presented by \cite{bergemann2015}. In this model a monopolist producer sells a good to a continuum of consumers, where each consumer demands exactly one unit of the good. We denote by $V = \{v_1, ... v_n \}$ the set of consumer valuations for the good. We assume $0 < v_1 < ... < v_n$. A market $\pi \in \Pi = \Delta(V)$ is a distribution over the possible valuations.

In a given market $\pi$, the demand for the good at any price in the interval $(v_{k-1},v_k]$ is $\sum_{j=k}^n \pi_j$ (with the convention that $v_0=0$). A price $v_k$ is said to be optimal (for the producer) in a market $\pi$ if for all $i\in[n]$:  $v_k \sum_{j=k}^n \pi_j \ge v_i \sum_{j=i}^n \pi_j$. We denote by $\Pi_k$ the set of all markets where price $v_k$ is optimal. We hold a given initial (aggregate) market denoted by $\pi^* \in \Pi$. We denote by $v^* = v_{i^*}$ the optimal uniform price for the initial market $\pi^*$. Thus, $\pi^* \in \Pi^* = \Pi_{i^*}$.

Segmentation is a division of the aggregate market into different markets. Thus, a segmentation $\sigma$ is a finite-support distribution over markets, with the interpretation that $\sigma(\pi)$ is the proportion of the population in the market $\pi$. Thus, the set of possible segmentations is:
\begin{align}
    \Sigma = \set[\Big]{\sigma \in \Delta(\Pi) | \sum_{\pi \in \operatorname{supp}(\sigma)} \sigma(\pi)\pi = \pi^*, |\operatorname{supp}(\sigma)| < \infty}. \nonumber
\end{align}

A pricing rule for a segmentation $\sigma$ specifies a price for each segment (market in the support of $\sigma$). Formally, a pricing rule for a segmentation $\sigma$ is a function $\phi: \operatorname{supp}(\sigma) \to V$. A pricing rule $\phi$ is optimal if for each segment $\pi \in \operatorname{supp}(\sigma)$, $\phi(\pi) = v_k$ implies $\pi \in \Pi_k$. That is, the charged price in every market $\pi$ must be optimal for the producer.

We denote the utility of a consumer with valuation $v_j$ when charged price is $v$ by $W_j(v) = \ind{v \le v_j} (v_j - v)$, and the utility of a producer charging price $v$ in the market $\pi$ by $W_S(v;\pi) = v \sum_{j=1}^n \ind{v \le v_j} \pi_j$.

Given a segmentation $\sigma$ and a pricing rule $\phi$, we define the consumer surplus: 
\begin{align}
    W_C(\sigma,\phi) = \sum_{\pi \in \operatorname{supp}(\sigma)} \sigma(\pi) \sum_{j=1}^n \pi_j W_j(\phi(\pi)) = \sum_{\pi \in \operatorname{supp}(\sigma)} \sigma(\pi) \sum_{j=1}^n \pi_j \ind{\phi(\pi) \le v_j} (v_j - \phi(\pi)) \nonumber
\end{align}
and the producer surplus: 
\begin{align}
    W_S(\sigma,\phi) = \sum_{\pi \in \operatorname{supp}(\sigma)} \sigma(\pi) W_S(\phi(\pi);\pi) = \sum_{\pi \in \operatorname{supp}(\sigma)} \sigma(\pi) \phi(\pi) \sum_{j=1}^n \ind{\phi(\pi) \le v_j} \pi_j \nonumber
\end{align}

The market segmentation model can naturally be extended to the case where the determined price $v$ is not necessarily in the set of user types $V$ (that is, $\phi: \operatorname{supp}(\sigma) \to \R_{+}$), although it is clear that any optimal pricing rule must satisfy $\forall \pi \in \operatorname{supp}(\sigma): \phi^*(\pi) \in V$. \cite{bergemann2015} also characterizes the set of feasible producer and consumer surpluses, see Figure \ref{bbm-triangle}. 

\begin{figure}[h]
\centering
\begin{tikzpicture}[scale=0.6]
    \draw[-latex] (-1,0) -- (6,0) node[right] {$\text{Consumer surplus}$};
    \draw[-latex] (0,-1) -- (0,6) node[above] {$\text{Producer surplus}$};
    \coordinate[label=left:$A$] (A) at (0,1);
    \coordinate[label=right:$B$] (B) at (4,1);
    \coordinate[label=left:$C$] (C) at (0,5);
    \draw (A) -- (B) -- (C) -- cycle;
\end{tikzpicture}
\caption{The feasible surplus triangle. \cite{bergemann2015} provide an algorithm for finding a consumer surplus maximizing segmentation $\sigma^*$ for a given market $\pi^*$ (point B).}
\label{bbm-triangle}
\end{figure}
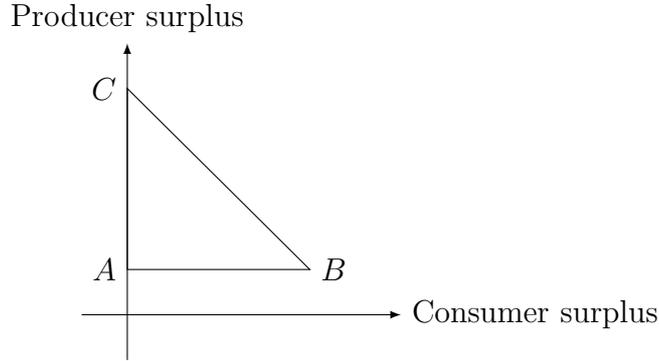

\subsection{Optimal Platform Policy: From Probabilities To Valuations}
\label{subsec:pp-optimal-platforms}

In this section, we show a reduction from the persuasion platform problem to the market segmentation problem. We then rely on the fact that \cite{bergemann2015} provides an algorithm for finding a consumer-surplus-maximizing segmentation in order to solve the platform persuasion problem. The reduction is done by defining an equivalent segmentation problem, in which the users' valuations for the products are the sender utilities from selling the product to users in the persuasion problem, and the initial (aggregate) market is the prior user distribution. We show that the games induced by the two problems are strategically equivalent, and therefore solving the equivalent segmentation problem imminently yields a solution for the platform persuasion problem.

\begin{definition}
    Given a persuasion platform problem instance $(\Theta, x^*, \mu)$, the equivalent market segmentation problem is the market segmentation problem instance $(V, \pi^*)$ defined as follows:
    
    \begin{equation}
        V = \set{\mu + (1-\mu) \tau_{\theta} }_{\theta \in \Theta} = \set{\mu (1 + \theta) }_{\theta \in \Theta} \nonumber 
    \end{equation}
        \begin{equation}
        \pi^* = x^* \nonumber
    \end{equation}
\end{definition}

Notice that in the equivalent market segmentation problem, the valuations of the consumers are derived from the users' persuasion thresholds, i.e. the probabilities of the sender recommending low-quality products which make the users indifferent between buying and not buying the product when the sender recommends. In other words, \emph{probabilities are translated into valuations.}
In the equivalent market segmentation problem, each user distribution $x$ is identified with a market $\pi$, and each sender policy $p$ is identified with a pricing rule of the producer $\phi(\pi) = \mu + (1 - \mu) p(x)$.
    
\begin{theorem}
\label{thm1}
Given a persuasion platform problem instance $(\Theta, x^*, \mu)$, consider the equivalent market segmentation problem $(V, \pi^*)$. For any user distribution $x$ (and its corresponding market $\pi$) and for any sender policy $p$ (and its corresponding pricing rule $\phi$) the following  properties hold:
\begin{enumerate}
    \item $U_S(p;x) = W_S(\phi(\pi);\pi)$
    \item $\forall j \in [n]$: $U_R^{\theta_j}(p;x) = W_j(\phi(\pi))$
\end{enumerate}
\end{theorem}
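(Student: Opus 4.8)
The plan is to prove both identities by direct substitution, after isolating the single algebraic fact that drives the whole reduction. First I would record the standing assumption $0<\mu<1$ (needed for $\tau_\theta=\tfrac{\mu}{1-\mu}\theta$ to be defined, for the valuations $v_j=\mu(1+\theta_j)$ to be positive and strictly increasing in $j$, and for the sign manipulations below), and the key identity
\[
v_j=\mu(1+\theta_j)=\mu+(1-\mu)\tau_{\theta_j},
\]
which is immediate from $\tau_{\theta_j}=\tfrac{\mu}{1-\mu}\theta_j$. Combined with $\phi(\pi)=\mu+(1-\mu)p(x)$ and $1-\mu>0$, this yields the crucial equivalence of indicator events, $\phi(\pi)\le v_j \iff p(x)\le\tau_{\theta_j}$; that is, the set of types that buy in the persuasion problem coincides exactly with the set of valuations that purchase at price $\phi(\pi)$ in the segmentation problem. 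The rest is bookkeeping using $\pi_j=x_j$.

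For property~1 I would start from the closed form in Lemma~\ref{lem1}, $U_S(p;x)=\sum_{j=1}^n x_j\,\ind_{p(x)\le\tau_{\theta_j}}(\mu+(1-\mu)p(x))$, and note that the factor $\mu+(1-\mu)p(x)=\phi(\pi)$ is independent of $j$. Pulling it out of the sum, substituting $\pi_j=x_j$, and applying the indicator equivalence gives $U_S(p;x)=\phi(\pi)\sum_{j=1}^n\pi_j\,\ind_{\phi(\pi)\le v_j}=W_S(\phi(\pi);\pi)$, which is precisely the definition of the producer surplus at price $\phi(\pi)$ in market $\pi$.

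For property~2 I would first pin down $U_R^{\theta_j}(p;x)$: a short direct computation from the best-response rule $BR_{\theta_j}$ (equivalently, reading it off Lemma~\ref{lem1} via $U_P(p;x)=\sum_j x_j\,U_R^{\theta_j}(p;x)$, which is legitimate since $U_R^{\theta_j}(p;x)$ is defined pointwise in $j$ and does not depend on the weights $x_j$) gives $U_R^{\theta_j}(p;x)=\ind_{p(x)\le\tau_{\theta_j}}(\mu\theta_j-(1-\mu)p(x))$. Then I would compute $v_j-\phi(\pi)=\mu(1+\theta_j)-\mu-(1-\mu)p(x)=\mu\theta_j-(1-\mu)p(x)$, so that $W_j(\phi(\pi))=\ind_{\phi(\pi)\le v_j}(v_j-\phi(\pi))=\ind_{p(x)\le\tau_{\theta_j}}(\mu\theta_j-(1-\mu)p(x))=U_R^{\theta_j}(p;x)$, again using the indicator equivalence.

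There is no deep obstacle here; the statement is essentially a certification lemma for the reduction and both parts reduce to one line of algebra once the identity $v_j=\mu+(1-\mu)\tau_{\theta_j}$ and the indicator equivalence are in place. The only points deserving care are (i) making explicit that $U_R^{\theta_j}(p;x)$ is obtained by a term-by-term identification (or a direct computation from $BR_{\theta_j}$) rather than a purely formal manipulation of aggregates, and (ii) being explicit that $\mu\in(0,1)$, which is what legitimizes dividing by $1-\mu$, preserves the direction of the inequality inside the indicators, and guarantees that $\{v_j=\mu(1+\theta_j)\}_{j}$ is a valid strictly increasing grid of positive reals and that $\phi$ maps into $\R_+$; I would state these at the outset.
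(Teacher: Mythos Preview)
Your proposal is correct and follows essentially the same approach as the paper: both establish the indicator equivalence $\phi(\pi)\le v_j \iff p(x)\le\tau_{\theta_j}$ via the monotone affine map $q\mapsto\mu+(1-\mu)q$, then substitute directly into the closed forms for $U_S$, $U_R^{\theta_j}$, $W_S$, and $W_j$. Your treatment is in fact slightly more careful than the paper's in making explicit the assumption $\mu\in(0,1)$ and the term-by-term identification of $U_R^{\theta_j}(p;x)$.
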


\begin{proof}
Let $x\in\Delta(\Theta)$ and a sender policy $p$. First, notice that $p(x) \le \frac{\mu}{1-\mu}\theta_j$ if and only if $\phi(\pi) \le v_j$ Since $f(q)=\mu + (1-\mu)q$ is monotonically increasing, and $\phi(\pi) = f(p(x))$. Therefore, from the definition of the sender and producer utilities, it follows that:
\begin{align}
U_S(p;x) = \sum_{j=1}^n x_j \ind{p(x) \le \frac{\mu}{1-\mu}\theta_j} (\mu + (1 - \mu) p(x)) = \sum_{j=1}^n \pi_j \ind{\phi(\pi) \le v_j} \phi(\pi) = W_S(\phi(\pi);\pi) \nonumber
\end{align}

Then, notice that if $\phi(\pi) \le v_j$ the utility of consumer of type $v_j$ from buying the product at price $\phi(\pi)$ is exactly:
\begin{align}
    W_j(\phi(\pi)) = v_j - \phi(\pi) &= \mu + (1 - \mu) (\frac{\mu}{1-\mu}\theta_j) - \mu - (1 - \mu) p(x) \nonumber \\
    &= (1 - \mu) (\frac{\mu}{1-\mu}\theta_j - p(x)) = \mu \theta_j - (1-\mu) p(x) = U_R^{\theta_j}(p;x) \nonumber
\end{align}
and otherwise they both equal $0$.
\end{proof}

Note that the equivalent market segmentation problem is indeed strategically equivalent to the persuasion problem only since we have restricted the sender's strategy space to strategies in which the sender always recommends buying a good product. The following corollaries, which follow immediately from Theorem \ref{thm1}, will be used to solve the problem of finding an optimal platform policy in a given persuasion platform problem:

\begin{corollary}
\label{cor1}
For every distribution $x \in \Delta(\Theta)$ (and the corresponding distribution $\pi \in \Delta(V)$) there exists some $k \in [n]$, such that the optimal sender strategy and the optimal pricing rule satisfy $\phi(\pi)=v_k$, $p^*(x)=\frac{\mu}{1-\mu}\theta_k$.
\end{corollary}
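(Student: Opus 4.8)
The plan is to read this off from Theorem~\ref{thm1}(1) together with the two structural facts already recorded in the excerpt: (i) any optimal sender strategy satisfies $p^*(x)\in\{\tau_\theta\}_{\theta\in\Theta}$, and (ii) in the market segmentation model any optimal pricing rule satisfies $\phi^*(\pi)\in V$. The underlying point is that the translation $q\mapsto\mu+(1-\mu)q$ carries the persuasion-side ``grid'' $\{\tau_{\theta_j}\}_{j=1}^n$ exactly onto the segmentation-side grid $V=\{v_j\}_{j=1}^n$, while Theorem~\ref{thm1}(1) says the sender's payoff and the producer's payoff agree grid point by grid point; hence the two maximization problems have the same maximizing index, and $\phi(\pi)=v_k$ forces $p^*(x)=\tau_{\theta_k}=\frac{\mu}{1-\mu}\theta_k$ for the same $k$.

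Concretely, first I would note that for a fixed user distribution $x$ the sender's problem $\max_p U_S(p;x)$ is solved pointwise in $x$, so by fact (i) it suffices to compare the $n$ candidates $p=\tau_{\theta_j}$, $j\in[n]$. For each such candidate the associated pricing value is $\phi(\pi)=\mu+(1-\mu)\tau_{\theta_j}=\mu(1+\theta_j)=v_j$, so the candidate automatically lies in $V$ and $\phi$ has the required form $\phi(\pi)=v_k$. By Theorem~\ref{thm1}(1) we have $U_S(\tau_{\theta_j};x)=W_S(v_j;\pi)$ for every $j$, so $k\in\argmax_{j\in[n]}W_S(v_j;\pi)$ iff $k\in\argmax_{j\in[n]}U_S(\tau_{\theta_j};x)$. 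Choosing any such $k$ yields simultaneously an optimal sender strategy $p^*(x)=\tau_{\theta_k}=\frac{\mu}{1-\mu}\theta_k$ and an optimal price $\phi(\pi)=v_k$.

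To close the argument I would check that this common grid maximum is genuinely optimal on each side, i.e. that no better off-grid choice was missed. On the segmentation side this is exactly fact (ii) (the optimal price over all of $\R_+$ already lies in $V$); on the persuasion side it is exactly fact (i) (the optimal $p$ over $[0,1]$ already lies in $\{\tau_\theta\}$). The only delicate point --- and the place where one must be a little careful --- is the interface between the two ranges: $p$ ranges over $[0,1]$ while a price is a priori unconstrained, so one should invoke the standing requirement that each persuasion threshold is a genuine probability, $\tau_{\theta_j}\le 1$ (equivalently $v_j\le 1$ for all $j$). Under this, $q\mapsto\mu+(1-\mu)q$ maps $[0,1]$ onto $[\mu,1]\supseteq V$, prices in $(0,v_1)$ (and in particular prices below $\mu$) are strictly dominated by $v_1$ since there producer surplus is increasing while all consumers still buy, and so the grid comparison above is exhaustive on both sides. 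Everything else is the bookkeeping already carried out in Theorem~\ref{thm1}.
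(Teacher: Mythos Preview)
Your argument is correct and is precisely the paper's approach: the paper simply states that Corollary~\ref{cor1} ``follows directly from the first property of Theorem~\ref{thm1},'' and your write-up is a faithful unpacking of that one-liner using the two grid facts (i) and (ii) already recorded in the text. Your additional paragraph on the range mismatch (ensuring $\tau_{\theta_j}\le 1$ so that the bijection $q\mapsto\mu+(1-\mu)q$ covers $V$) is a point of rigor the paper glosses over, but it does not change the route of the proof.
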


\begin{corollary}
\label{cor2}
For every Bayes-plausible distribution of user-distributions $\sigma \in \Sigma$, $U_S(\sigma,p) = W_S(\sigma,\phi)$ and $U_P(\sigma,p) = W_C(\sigma,\phi)$, where $p$ and $\phi$ are the best-responses to $\sigma$ at the persuasion and segmentation problems respectively.
\end{corollary}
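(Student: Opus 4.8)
The plan is to derive Corollary~\ref{cor2} directly from Theorem~\ref{thm1} and Corollary~\ref{cor1} by summing the pointwise identities over the support of $\sigma$. The first observation is that, since $\pi^* = x^*$ in the equivalent market segmentation problem, a Bayes-plausible distribution $\sigma \in \Sigma$ in the persuasion problem is literally the same object as a segmentation in the market segmentation problem; under the identification of each $x \in supp(\sigma)$ with the market $\pi$ having $\pi_j = x_j$, the weights $\sigma(x)$ and $\sigma(\pi)$ coincide.

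Next I would record that both best-response problems decompose segment by segment: the sender's payoff $U_S(\sigma,p) = \sum_{x \in supp(\sigma)} \sigma(x) U_S(p;x)$ is maximized by choosing, independently for each $x$, a value $p(x)$ that maximizes $U_S(p;x)$ — and by the remark following Lemma~\ref{lem1} this optimum can be taken in $\{\tau_\theta\}_{\theta\in\Theta}$ — while the producer's surplus $W_S(\sigma,\phi) = \sum_{\pi \in supp(\sigma)} \sigma(\pi) W_S(\phi(\pi);\pi)$ is maximized by choosing, independently for each $\pi$, an optimal price, which can be taken in $V$. By Theorem~\ref{thm1}(1) we have $U_S(p;x) = W_S(\phi(\pi);\pi)$ whenever $\phi(\pi) = \mu + (1-\mu)p(x)$, for \emph{every} (not necessarily optimal) sender strategy; since $q \mapsto \mu + (1-\mu)q$ is a bijection from $\{\tau_\theta\}_{\theta\in\Theta}$ onto $V$, it follows that $p$ is a sender best response to $\sigma$ if and only if the associated pricing rule $\phi$ is an optimal pricing rule for $\sigma$. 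In particular, Corollary~\ref{cor1} lets me fix a matched pair $(p,\phi)$ with $\phi(\pi) = \mu + (1-\mu)p(x)$ on every segment.

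With such a matched pair fixed, the two claimed equalities follow by summation. For the sender, multiply the identity of Theorem~\ref{thm1}(1) by $\sigma(x)$ and sum over $x \in supp(\sigma)$ to obtain $U_S(\sigma,p) = W_S(\sigma,\phi)$. For the platform, start from $U_P(\sigma,p) = \sum_{x \in supp(\sigma)} \sigma(x)\sum_{j=1}^n x_j\, U_R^{\theta_j}(p;x)$, apply Theorem~\ref{thm1}(2) to replace $U_R^{\theta_j}(p;x)$ by $W_j(\phi(\pi))$, and use $x_j = \pi_j$ to recognize the result as $\sum_{\pi \in supp(\sigma)}\sigma(\pi)\sum_{j=1}^n \pi_j W_j(\phi(\pi)) = W_C(\sigma,\phi)$.

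The only point that needs a little care — and the closest thing to an obstacle — is the non-uniqueness of best responses: when a segment admits several optimal prices (equivalently, several optimal sender thresholds), one must ensure the sender strategy and the pricing rule are chosen consistently. The bijection between threshold values and valuations together with Theorem~\ref{thm1}(1) settles this, since it shows the two sets of maximizers correspond to each other exactly, so any optimal $\phi$ arises from an optimal $p$ and vice versa, and all such matched pairs yield the same surpluses.
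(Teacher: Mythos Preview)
Your proof is correct and follows essentially the same route as the paper's brief justification: identify a Bayes-plausible $\sigma$ with a segmentation, use Corollary~\ref{cor1} together with the two parts of Theorem~\ref{thm1}, and sum over segments. Your explicit treatment of the non-uniqueness of best responses via the bijection $q \mapsto \mu+(1-\mu)q$ from $\{\tau_\theta\}_{\theta\in\Theta}$ onto $V$ is more careful than the paper, which leaves this implicit, but the underlying argument is the same.
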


\begin{corollary}
\label{cor3}
Let $\sigma^*$ be the segmentation that maximizes the consumer surplus at the equivalent market segmentation problem. Then $\sigma^*$ also maximizes the average user utility at the persuasion platform problem.
\end{corollary}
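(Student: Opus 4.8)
The plan is to derive this corollary almost directly from Corollary \ref{cor2}, once I observe that the two optimization problems are being solved over the \emph{same} feasible domain. First I would note that the set of Bayes-plausible distributions $\Sigma$ coincides in the persuasion platform problem and in the equivalent market segmentation problem: since $\pi^* = x^*$ and the map $\theta_j \mapsto v_j = \mu(1+\theta_j)$ is an order-preserving bijection from $\Theta$ onto $V$, any user distribution $x \in \Delta(\Theta)$ is identified with a market $\pi \in \Delta(V)$, and the Bayes-plausibility condition $\E_{x \sim \sigma}[x] = x^*$ is literally the segmentation constraint $\sum_{\pi \in supp(\sigma)} \sigma(\pi)\pi = \pi^*$. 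So ``a platform signaling policy'' and ``a segmentation'' are one and the same object $\sigma \in \Sigma$.

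Next, fix $\sigma \in \Sigma$ and let $p_\sigma$ be an optimal sender response to $\sigma$ and $\phi_\sigma$ an optimal producer pricing rule for $\sigma$. By Corollary \ref{cor1}, at each $x \in supp(\sigma)$ the optimal sender threshold $p_\sigma(x) = \frac{\mu}{1-\mu}\theta_k$ is matched to the optimal price $\phi_\sigma(\pi) = v_k$, so $p_\sigma$ and $\phi_\sigma$ correspond segment by segment; and by Corollary \ref{cor2}, $U_P(\sigma, p_\sigma) = W_C(\sigma, \phi_\sigma)$. Hence the platform's equilibrium objective, seen as a function of $\sigma$, is \emph{pointwise equal} on $\Sigma$ to the consumer surplus seen as a function of $\sigma$. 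Therefore $\argmax_{\sigma \in \Sigma} U_P(\sigma, p_\sigma) = \argmax_{\sigma \in \Sigma} W_C(\sigma, \phi_\sigma)$, and in particular the consumer-surplus-maximizing segmentation $\sigma^*$ also maximizes the average user utility in the persuasion platform problem; existence of $\sigma^*$ is guaranteed by \cite{bergemann2015}.

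The only delicate point — minor, but worth stating — is the handling of ties in the sender's (equivalently, the producer's) best response: when several thresholds/prices are simultaneously optimal at some segment, one must adopt the same selection rule in both problems so that $p_\sigma$ and $\phi_\sigma$ genuinely correspond and the identity of Corollary \ref{cor2} applies to the same subgame-perfect equilibrium. Since Corollary \ref{cor1} shows that the optimal-response correspondences are themselves identified via the bijection $\theta_j \leftrightarrow v_j$, any tie-breaking convention transports from one problem to the other, so this is harmless. It also remains to record the (immediate) fact that the platform-optimal subgame-perfect equilibrium has value exactly $\max_{\sigma \in \Sigma} U_P(\sigma, p_\sigma)$, which is just the definition of the platform's problem; the real content of the corollary is entirely carried by the equivalence established in Theorem \ref{thm1} and its corollaries.
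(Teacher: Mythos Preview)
Your proposal is correct and takes essentially the same approach as the paper, which simply states that Corollary~\ref{cor3} follows directly from Corollary~\ref{cor2}. You have fleshed out the details the paper leaves implicit (the identification of the feasible set $\Sigma$ in both problems, and the tie-breaking issue), but the core argument---pointwise equality of the objectives under best responses, hence equality of argmaxes---is the same.
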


Corollary \ref{cor1} follows directly from the first property of Theorem \ref{thm1}. As for the Corollary \ref{cor2}, first note that the set of platform strategies is exactly $\Sigma$, since for every Bayes-plausible distribution $\sigma \in \Sigma$ there exists a signaling policy that induces it. The Corollary \ref{cor2} then follows from this fact combined with Corollary \ref{cor1}, and the second property of Theorem \ref{thm1}. Corollary \ref{cor3} follows directly from the Corollary \ref{cor2}.

We conclude this section by recalling that there exists an efficient algorithm for finding a consumer surplus maximizing segmentation:

\begin{theorem}{(\citealp{bergemann2015})}
    \label{thm2}
     There exists an algorithm that given a market segmentation problem instance $(V, \pi^*)$ finds a consumer surplus maximizing segmentation $\sigma^*$ in $O(|V|)$. Moreover, this segmentation satisfies $W_S(\sigma^*, \phi^*) = W_S(\phi^*(\pi^*); \pi^*)$ (where $\phi^*$ is an optimal pricing rule).
\end{theorem}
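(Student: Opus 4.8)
This statement is due to Bergemann, Brooks and Morris \cite{bergemann2015}; the plan is to recall the structure of their constructive proof and, in addition, to isolate the producer-surplus identity, which is the part we shall use. I would split the argument into (i) an upper bound on the attainable consumer surplus, (ii) a segmentation attaining it, and (iii) the complexity and the revenue bookkeeping.

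For (i): fix any segmentation $\sigma$ with an optimal pricing rule $\phi$. The producer can always deviate to charging the uniform monopoly price $v^*$ in every segment, so $W_S(\sigma,\phi) \ge \sum_{\pi} \sigma(\pi) W_S(v^*;\pi) = W_S(v^*;\pi^*)$ by the defining constraint $\sum_\pi \sigma(\pi)\pi = \pi^*$ of a segmentation. Also the total surplus $W_C(\sigma,\phi)+W_S(\sigma,\phi)$ cannot exceed the efficient surplus $\sum_j \pi^*_j v_j$, with equality only if every consumer in every segment buys. Combining, $W_C(\sigma,\phi) \le \sum_j \pi^*_j v_j - W_S(v^*;\pi^*)$, i.e.\ point $B$ in Figure~\ref{bbm-triangle}. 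Hence it suffices to exhibit a segmentation $\sigma^*$ that is efficient (every consumer buys under $\phi^*$) and under which the producer earns exactly $W_S(v^*;\pi^*)$; since $W_S(\phi^*(\pi^*);\pi^*)=W_S(v^*;\pi^*)$, this also gives the ``moreover'' clause.

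For (ii): the building blocks are \emph{extremal markets} --- distributions supported on a set of valuations in which the producer's revenue is the \emph{same} at every price in the support, so that every such price is optimal there, in particular the smallest one. Following \cite{bergemann2015}, I would peel $\pi^*$ greedily into a finite convex combination $\sigma^* = \sum_t \sigma^*(\pi_t)\,\pi_t$ of extremal markets, maintaining the invariant that each $\pi_t$ contains $v^*$ in its support (so $v^*$ is an optimal price in $\pi_t$) and that the residual market keeps $v^*$ as an optimal price. In every segment $\pi_t$ the producer is then indifferent between its lowest present valuation and $v^*$; breaking ties for the consumers, the producer charges the lowest present valuation, all consumers in $\pi_t$ buy (efficiency), and the per-segment revenue equals $W_S(v^*;\pi_t)$. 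Summing against the weights $\sigma^*(\pi_t)$ and using $\sum_t\sigma^*(\pi_t)\pi_t=\pi^*$ gives total producer surplus $W_S(v^*;\pi^*)$, and segment-wise efficiency gives the maximal consumer surplus of part (i).

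For (iii): I would choose the mass peeled off at each step to be the largest scalar for which the extremal segment still fits componentwise inside the residual market; this zeroes at least one coordinate of the residual per step, so the loop terminates in at most $|V|$ iterations, and with running aggregates (cumulative masses from the top) each iteration is $O(1)$ after an initial $O(|V|)$ pass, for $O(|V|)$ in total. The step I expect to be the main obstacle is exactly the feasibility of this greedy peel: checking that the extremal segment to be subtracted is componentwise nonnegative and does not exceed the residual, that after subtraction the residual is a nonnegative multiple of a genuine market in which $v^*$ is still optimal (so the inequalities $v_k\sum_{j\ge k}\pi_j \le v_{i^*}\sum_{j\ge i^*}\pi_j$ are preserved for every $k$), and that the process exhausts $\pi^*$ exactly rather than leaving a remainder. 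This monotone-invariant verification is the technical heart of \cite{bergemann2015}, and I would reproduce it.
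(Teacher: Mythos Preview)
The paper does not actually prove Theorem~\ref{thm2}: it is stated as a quotation of the result of Bergemann, Brooks and Morris and is used as a black box (see the sentence ``We conclude this section by recalling that there exists an efficient algorithm\ldots'' immediately preceding it). There is therefore no in-paper proof to compare your proposal against.

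That said, your sketch is a faithful outline of the argument in \cite{bergemann2015}: the upper bound in (i), the extremal-market decomposition with the invariant that $v^*$ remains optimal in the residual in (ii), and the termination count in (iii) are exactly the ingredients of their construction, and your identification of the ``moreover'' clause with the per-segment indifference between the lowest supported valuation and $v^*$ is correct. The one place where your write-up is thinner than the original is, as you yourself flag, the verification that the greedy peel preserves optimality of $v^*$ in the residual market at every step; if you intend to include a self-contained proof rather than a citation, that invariant is the part that needs to be written out in full. For the purposes of this paper, however, a citation suffices and your reconstruction is more than what is required.
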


The fact that the reduction is done in linear time (combined with Theorem \ref{thm2} and Corollary \ref{cor3}) implies that there also exists an algorithm for finding an average user utility maximizing platform policy $\sigma^*$ in every persuasion platform problem instance $(\Theta, x^*, \mu)$. Moreover, this policy satisfies $U_S(\sigma^*, p^*)=U_S(p^*;x^*)$.

\paragraph{Achieving the Pareto-frontier} In practice, a two-sided platform may care about the welfare of both sellers and buyers, aiming to find a balance between the two. For example, the platform may charge fees from the sellers (which increase when the seller's revenue increases), or the platform wants to preserve the satisfaction of both the users and the sellers to avoid retention. Note that since the persuasion problem can be reduced to a problem of market segmentation, the set of feasible platform and sender utilities in the persuasion platform problem has a similar structure as in Figure \ref{bbm-triangle}. Therefore, solving the problem of maximizing the average user utility can be used to achieve any pair of sender and platform utilities on the Pareto-frontier: since optimal sender utility (point C in Figure \ref{bbm-triangle}) can be achieved by full revelation of the user type, and since every pair of utilities on the Pareto-frontier is a convex combination of C and B, one can achieve any such point by simply randomizing between the user optimal policy (derived from \citealp{bergemann2015}) and the full revelation policy.

\subsection{Beyond Binary State Space}
\label{subsec:pp-beyond-two-states}

In this section, we show how the reduction to the market segmentation problem, presented in Theorem \ref{thm1}, can be generalized beyond the binary state space case presented in our model. We provide a sufficient condition on the user utilities that guarantees the reduction indeed yields a solution for the persuasion platform problem, and then provide two concrete examples of natural cases satisfying the condition. 

We consider a generalized adoption persuasion problem, with a state space $\Omega$ that is an arbitrary abstract measurable state space equipped with a prior probability distribution $\mu \in \Delta(\Omega)$, and $\Theta$ is an abstract finite set of user types equipped with the prior probability distribution $x^* \in \Delta(\Theta)$. We assume that the prior over $\Omega \times \Theta$ is the product distribution $\mu \times x^*$. In addition, the receiver's action set $A=\{a_0,a_1\}$ is binary, and the receiver of type $\theta$ has a utility function $u^\theta_R: \Omega \times A \to \mathbb{R}$ such that $a_0$ represents the decision to opt-out and thus guarantees a constant utility of $0$ and $a_1$ represents a risky action of adopting the product.

The sender's utility function is still a state-independent utility that yields a utility of $1$ if and only if the action of the receiver is $a_1$, and zero otherwise.

Consider the persuasion problem where the prior over types $x^\theta\in\Delta(\Theta)$ assigns a probability one to the user of type $\theta$ for some $\theta\in\Theta$. That is, $x^\theta$ is the Dirac distribution over $\theta$ and represents the case where the sender knows that the receiver's type is $\theta$. We note that in this case there exists an optimal straightforward recommendation policy $p_\theta:\Omega\to\Delta(A)$ for the sender that is incentive compatible for the receiver of type $\theta$. We again abuse the notation and identify the image of $p_\theta$ with the probability of recommending the action $a_1$.

For any pair of types $\theta, \theta' \in \Theta$,  denote by $S_\theta$ the utility of the sender when playing optimally against a user of known type $\theta$, and denote by $R_{\theta \to \theta'}$ the utility of a user of type $\theta$ upon following the recommendation policy $p_{\theta'}$ (note that $p_{\theta'}$ is not necessarily incentive compatible for the receiver of type $\theta$).  Formally, we define:

\begin{equation*}
    S_\theta = \E_{\omega \sim \mu} [p_\theta(\omega)]
\end{equation*}
\begin{equation*}
    R_{\theta \to \theta'} = 
    \E_{\omega \sim \mu}  [p_{\theta'}(\omega)u_R^\theta(a_1,\omega)]
\end{equation*}

For instance, in our standard binary state space model, it holds that:

\begin{equation*}
    S_\theta = \mu (1 + \theta)
\end{equation*}
\begin{equation*}
    R_{\theta \to \theta'} = \mu (\theta - \theta')
\end{equation*}

We now make a similar reduction to a market segmentation problem by setting $V = \{ v_\theta \}_{\theta \in \Theta} = \{S_{\theta}\}_{\theta \in \Theta}$ and $\pi^* = x^*$. The following proposition provides sufficient conditions for the same reduction to solve a more general persuasion platform problem:

\begin{proposition}
    Given a generalized persuasion platform problem $(\Omega,\Theta,\mu,x^*,\{u_R^\theta\}_{\theta \in \Theta})$, consider the equivalent market segmentation problem $(V,\pi^*)$. Assume that the following conditions hold:
    \begin{enumerate}
        \item For any user distribution $x \in \Delta(\Theta)$, there exists $\theta \in \Theta$ such that $p_\theta \in \arg\max_{p} U_S(p;x)$. 
        \item There exists a constant $\alpha > 0$ such that for any $\theta, \theta' \in \Theta$, $S_{\theta} - S_{\theta'} = \alpha R_{\theta \to \theta'}$.
        \item For any $\theta, \theta' \in \Theta$, if $ R_{\theta \to \theta'} < 0$ then playing $a_0$ regardless of the sender's recommendation is a best-reply of the user of type $\theta$ against the sender's strategy $p_{\theta'}$.
    \end{enumerate}
    Then, any consumer surplus maximizing segmentation $\sigma \in \Sigma$ in the market segmentation problem is an optimal platform policy in the generalized persuasion platform problem.
\end{proposition}

\begin{proof}
    Similarly to the case of binary state space, it is sufficient to show that the games are strategically equivalent (see Theorem \ref{thm1} and Corollary \ref{cor3}).
    Let $x \in \Delta(\Theta)$ be a user distribution, and let $p_{\theta'}$ be an optimal sender strategy (from the \emph{first condition}, there exists an optimal sender strategy of this form).
    For the posterior user distribution $x$ (and its corresponding market $\pi$), the utility of the sender in the persuasion problem is $S_{\theta'}$ times the mass of users who follow its recommendation (from the \emph{third condition}, these are only the users $\theta$ for which $R_{\theta \to \theta'} \ge 0$), and the utility for the consumers in the segmentation problem is the corresponding optimal price $S_{\theta'}$ times the mass of consumers for which $v_\theta - v_{\theta'} = S_\theta - S_{\theta'} \ge 0$. From the \emph{second condition}, it holds that the two terms are equal, hence the sender and producer utilities are the same in the two problems.

    It is now left to show that the users and consumers have the same incentive. Note that indeed the \emph{second condition} guarantees that a user of type $\theta$ will opt-in when the sender's posterior is $x$ if and only if a consumer whose valuation is $v_{\theta}$ will purchase the product in the corresponding segment $\pi$. 
    Moreover, since the constant $\alpha$ is uniform across all pairs of types, it follows that the consumers' surplus in $\pi$ equals the users' utility in $x$ up to a positive multiplicative factor $\alpha$, which is independent of the distribution $x$. Therefore, a segmentation that maximizes the average consumer surplus in the market segmentation problem also maximizes the users' average utility in the persuasion platforms problem.
\end{proof}




Note that in the standard binary case, the first condition holds trivially, and the second condition holds with the constant $\alpha = 1$. As for the third condition, note that it holds in the following two general cases:
\begin{enumerate}
    \item For any $\omega \in \Omega$, the sign of $u_R^{\theta}(a_1,\omega)$ does not depend on $\theta$.
    \item There exists a complete order on $\Omega$, and for any $\theta$, $u_R^{\theta}(a_1,\omega)$ is monotone in $\omega$.
\end{enumerate}
The standard binary state model falls into the first case, and the two examples that follow fall into these two cases as well.

We now turn to provide two concrete, natural examples for state spaces and utility schemes for the users, for which the above conditions holds, implying the persuasion problem can be solved simply by reducing to the equivalent market segmentation problem.

\paragraph{Multiple states with a single global low-quality state} Consider a finite state space $\Omega=\{\omega_0,\omega_1, \ldots, \omega_k\}$ of cardinality $k+1$, in which at states $\omega_1, ..., \omega_k$ playing $a_1$ yields some positive utility for all user types (depending on their specific utility function), and the "bad" state $\omega_0$ yields a constant negative utility of $-1$ whenever there is a trade. The motivation is as follows: the seller can provide either one of $k$ copies of the product which are of some decent quality (which is subjective and depends on the user type), or she can deliver some broken product that is equally useless for all user types. Formally, a user type $\theta \in \Theta$ is now a vector is $\R_{+}^k$, and the utility function of a $\theta-$type user is now given by:
\begin{equation*}
    u^{\theta}_R(a,\omega) = \begin{cases}
        0  & a = a_0 \\
        -1 & a = a_1 \text{ and }\omega = \omega_0 \\
        \theta_l & a = a_1 \text{ and }\omega = \omega_l\text{ for } l=1, ... , k
\end{cases}
\end{equation*}

Note that the standard binary state case presented in our model is a private case of this multiple states case, with $k=1$. We further denote $\mu_l = \mu(\omega_l)$. It is straightforward to show that a sender's optimal policy against a user of a known type $\theta$ is to recommend $\tilde{a}_1$ with probability $1$ whenever the state is other than $\omega_0$, and when the state is $\omega_0$ recommend $\tilde{a}_1$ with the following probability (which makes the user indifferent):

\begin{equation*}
    \tau_\theta = \frac{1}{\mu_0} \sum_{l=1}^k \mu_l \theta_l
\end{equation*}

Hence, the sender's utility against a type $\theta$ user when playing optimally is given by $S_\theta = (1 - \mu_0) + \mu_0 \tau_\theta$, and $R_{\theta \to \theta'}$ can now be written as:

\begin{equation*}
    R_{\theta \to \theta'} = \sum_{l=1}^k \mu_l \theta_l + \mu_0 \tau_{\theta'} = \mu_0 (\tau_{\theta} - \tau_{\theta'})
\end{equation*}

implying the second condition holds with $\alpha = 1$. Lastly, the first condition holds trivially, and the third condition holds since for any $\omega \in \Omega$, the sign of $u_R^{\theta}(a_1,\omega)$ does not depend on $\theta$.

\paragraph{Continuous state space} Consider a normalized continuum of states, namely $\Omega = [0,1]$, equipped with some nonatomic distribution $\mu \in \Delta(\Omega)$. The user type is now identified with a single parameter $\theta\in[0,1]$ which stands for the threshold quality of the product from which it considers the product as valuable. Formally, the utility of a $\theta-$type user from buying is given by $u^{\theta}_R(a_1,\omega) = \ind{\omega \ge \theta} - c$ for some constant $c \in (0,1)$, which stands for the cost (e.g., price) of the product, which is the same for all user types. The utility of not buying the product remains zero.

As for the sender's optimal strategy against a given user type $\theta$, it can be shown that there always exists a deterministic threshold policy, in which the sender recommends $\tilde{a}_1$ if and only if the true state $\omega$ is at least some $f(\theta)$, where $f(\theta) < \theta$.\footnote{While it is clear that $f(\theta)$ also depends on the cost $c$, we omit it from the notation for brevity (as the cost is uniform across all user types).} The threshold of the persuasion policy is again determined using the indifferent condition of the user. The sender's utility from targeting a $\theta-$type user is then simply $S_\theta = \mu(\omega \ge f(\theta))$. As for $R_{\theta \to \theta'}$,

\begin{align*}
    R_{\theta \to \theta'} &= \E_{\omega \sim \mu} [\ind{\omega \ge \theta} - c | \omega \ge f(\theta')] \cdot \mu(\omega \ge f(\theta')) \\
    &= \mu(\omega \ge \theta | \omega \ge f(\theta')) \mu(\omega \ge f(\theta')) - c \mu(\omega \ge f(\theta')) \\
    &= \mu(\omega \ge \theta) - c \mu(\omega \ge f(\theta'))
\end{align*}

Now, since $R_{\theta \to \theta} = 0$ by the construction of the sender's optimal strategy, subtracting it does not change the term, hence:

\begin{align*}
    R_{\theta \to \theta'} &= R_{\theta \to \theta'} - R_{\theta \to \theta} \\
    &= \mu(\omega \ge \theta) - c \mu(\omega \ge f(\theta')) - (\mu(\omega \ge \theta) - c \mu(\omega \ge f(\theta))) \\
    &= c(\mu(\omega \ge f(\theta)) - \mu(\omega \ge f(\theta'))) = c(S_{\theta} - S_{\theta'})
\end{align*}

implying the second condition holds with $\alpha = \frac{1}{c}$. Again the first condition holds trivially, and the third condition holds since for any $\theta$, $u_R^{\theta}(a_1,\omega)$ is monotone in $\omega$.

\section{Reputation-based Persuasion Platforms}
\label{sec:reputation-based-pp}

\subsection{The Model}
\label{subsec:rbpp-model}

We now turn to extend the persuasion platform model to a repeated case, in which myopic users, drawn i.i.d from the prior user distribution $x^*\in\Delta(\Theta)$, arrive sequentially for an infinite number of time periods. In this setting the platform has the ability to impose an irreversible punishment on the sender. The punishment results in a loss for both the sender and the platform.
The setting may be motivated by reputation considerations where the platform and the sender are engaged in a contract in which the platform requires the sender to fully reveal the state on a subset of signals. If the sender violates the contract it is thrown out of the platform to some less desirable outside option.

Formally, on the outset, the platform commits to an information revelation policy $\sigma: \Theta \to \Delta(S)$ which will be used for all subsequent time periods, and additionally it specifies a subset of signals $S_T \subset S$ (or, equivalently, on a subset of the posteriors $X_T \subset \operatorname{supp}(\sigma)$), on which it requires from the sender to act truthfully. i.e., recommend a product of low quality with probability zero.\footnote{We restrict attention to stationary platform policies, meaning that both $\sigma$ and $S_T$ are time-independent. While this may not be user-welfare maximizing, one possible justification is as follows: a non-stationary platform policy is unfair for the users, in the sense that the expected utility of a myopic user of a certain type may depend on its arrival time – meaning two identical users may expect different utilities just because they arrived on different times. The fact that we restrict attention to stationary policies prevents this from happening.}

The sender then commits to its own persuasion policy $p: S \to \Delta(A)$ (where again we identify $p(s)$ with the probability of recommending a low-quality product after observing $s \in S$), and then users begin to arrive one by one.

The platform maintains a reputation for the sender, which can be either high or low at the beginning of each time period. We assume that at the beginning of the first time period, the sender has a high reputation.

If at time period $t$ the reputation of the sender is high, then the interaction between the platform, the sender, and the user at time $t$ is the same as in the one-shot model, with the distinction that now the platform also observes the satisfaction level of the user at the end of the interaction. We assume that the satisfaction level of a user from the interaction is bad if it purchased a product of low quality, and good otherwise.\footnote{Our model can be extended to the case where the feedback on user satisfaction is noisy, i.e., it is correct only with some known probability $q$. In this case, the analysis can be adapted by plugging this probability into the transition probabilities between reputation states.}

If the platform provided to the sender a signal on which it requires truthful behavior, and the satisfaction level the platform observes is bad (namely, the sender had a high reputation, and manipulated the user into buying a product of low quality), then the sender is being permanently moved to the low reputation state for all future time periods. Otherwise, the sender begins the next round with a high reputation. Formally, the sender is being moved from high to low reputation after time period $t$, if and only if $s_t \in S_T$, $\omega_t = \omega_0$ and $a_t = a_1$.

At the low reputation state, the sender suffers from a fixed punishment utility $\bar{u}$, satisfying $\bar{u} < U_S(p^*;x^*)$. That is, in the low reputation state the sender's utility is strictly lower than in the case where it has no user information at all. One can interpret the low reputation state as an alternative market or an external option, that is less favorable to the sender relative to the scenario where the platform does not provide any information about the user's preferences. When at low reputation, the utility for the user is defined to be zero.

The sender then plays to optimize its discounted utility stream for some discount factor $0 < \delta < 1$, while the platform optimizes the non-discounted average user utility.

Let $\Sigma$ be the set of all Bayes-plausible distributions with respect to the prior distribution $x^*$. A policy of the platform is now a tuple $(\sigma, X_T)$, where $\sigma \in \Sigma$ and $X_T \subset \operatorname{supp}(\sigma)$ is the subset of posterior on which the platform requires the sender to play truthfully. Note that once fixing a platform policy $(\sigma, X_T)$, one can think of the sender as an agent operating in an induced Markov decision process (MDP), as demonstrated in Figure \ref{sender-mdp}.

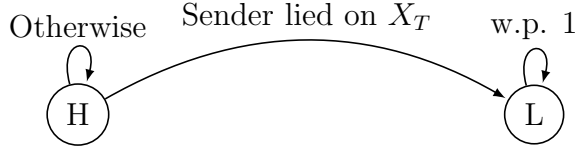
\begin{figure}
    \centering
    \begin{tikzpicture}[->,>=latex,shorten >=1pt,auto,node distance=6cm,semithick]
      \tikzset{state/.style={circle,fill=white,draw=black,text=black}}
    
      \node[state] (H) {H};
      \node[state] (L) [right of=H] {L};
    
      \path (H) edge [loop above] node {Otherwise} (H)
            (H) edge [bend left] node {Sender lied on $X_T$} (L)
            (L) edge [loop above] node {w.p. 1} (L);
    \end{tikzpicture}
    \caption{Reputation-based Persuasion Platform as an MDP for the sender.}
    \label{sender-mdp}
\end{figure}

Unlike the one-shot problem, the reputation-based persuasion problem cannot be reduced to an equivalent market segmentation problem with reputation. The reason is that in the one-shot identification, we identified prices with the probability of recommending a low-quality product. In the repeated market segmentation problem, prices are verifiable and this can be used by a platform to deter the sender. In contrast, in our case, the only verifiable information is whether a low-quality product has been purchased. This implies that the strategic problems in the repeated setting are no longer equivalent.  

Without loss of generality, we assume that $|X_T|=1$, and denote by $x^T$ the single posterior in $X_T$.\footnote{This is due to the fact that for any incentive-compatible policy satisfying $|X_T|>1$, one can construct a new incentive-compatible policy providing the same average user utility, for which $|X_T|=1$ by simply merging all of the posteriors in $X_T$.} To shorten, we refer to the tuple $(\sigma, x_T)$ by $\sigma$ only. We further denote:

\begin{equation}
\begin{gathered}
    \operatorname{supp}(\sigma) = \set{x^1, ... x^m, x^T } \nonumber \\
    \forall i \in [m]: \alpha^i = \sigma(x^i) \nonumber \\
    \alpha^T = \sigma(x^T) \nonumber 
\end{gathered}
\end{equation}

Denoting $p_j = \tau_{\theta_j}$, $I_j(x,p)=x_j \ind{p(x) \le p_j}$, the sender and platform one-shot utilities at the high reputation state are given by,

\begin{align}
U_S(\sigma,p) &= \alpha^T \sum_{j=1}^n I_j(x^T,p)(\mu + (1-\mu)p(x^T)) \nonumber \\
&\quad+ \sum_{i=1}^m \alpha^i \sum_{j=1}^n I_j(x^i,p)(\mu + (1-\mu)p(x^i)) \nonumber \\
\nonumber
\end{align}

\begin{align}
U_P(\sigma,p) &= \alpha^T \sum_{j=1}^n I_j(x^T,p)(\mu \theta_j - (1-\mu)p(x^T)) \nonumber \\
&\quad+ \sum_{i=1}^m \alpha^i \sum_{j=1}^n I_j(x^i,p)(\mu \theta_j - (1-\mu)p(x^i)) \nonumber
\end{align}

We now define two types of possible sender strategies:
\begin{definition}
A strategy of the sender $p^*$ is greedy if for all $x \in \Delta(\Theta)$: 
\begin{equation}
    p^*(x) = \argmax_p U_S(p;x) \nonumber
\end{equation}
\end{definition}

\begin{definition}
A strategy of the sender $p_T$ is truthful (with respect to a given platform policy $\sigma$) if there exists a greedy strategy $p^*$ such that for all $x \in \Delta(\Theta)$: 
\begin{equation}
    p_T(x) = \begin{cases}
0 &\text{$x = x^T$}\\
p^*(x) &\text{$x \neq x^T$} \nonumber \\
\end{cases}
\end{equation}
\end{definition}

We assume that when the sender is indifferent between targeting multiple types of users (i.e., when $\argmax_p U_S(p;x)$ is not uniquely defined), it targets the lowest type among them. That is, we uniquely define the optimal sender strategy to be $p^*(x) = \min \argmax_p U_S(p;x)$. Therefore, the truthful strategy $p_T$ is also uniquely defined.

Note that without loss of generality, we can only consider platform policies that are incentive-compatible, i.e., policies in which the sender does not benefit from deviating from being truthful (lying with zero probability) when it operates at the posterior $x^T$ (clearly, the sender will always be greedy at any other posterior, as lying at such posterior has no consequences at all):

\begin{definition}
    A policy of the platform $\sigma \in \Sigma$ is incentive-compatible (IC) if $p_T$ is the sender's best response with respect to $\sigma$.
\end{definition}

For any $x \in \Delta(\Theta)$ and $k \in [n]$, we denote by $F_k(x) = \sum_{j=k}^n x_j$ the mass of users in $x$ whose persuasion threshold is weakly above $p_k$. Given a policy $\sigma$, we denote by $\sigma_F = \sigma |_{x \neq x^T}$ the policy obtained by conditioning on the posterior to be any non-truthful posterior, and denote by $x^F = \E_{x \sim \sigma_F} [x] = \E_{x \sim \sigma} [x | x \neq x^T]$ its mean. Note that one could decompose the sender and platform utilities as follows:

\begin{equation}
\begin{gathered}
    U_S(\sigma,p)=\alpha^T U_S(p;x^T) + (1-\alpha^T) U_S(\sigma_F,p) \nonumber \\
    U_P(\sigma,p)=\alpha^T U_P(p;x^T) + (1-\alpha^T) U_P(\sigma_F,p) \nonumber
\end{gathered}
\end{equation}

We denote by $V(\sigma) = U_S(\sigma, p_T)$ the sender's one-shot utility when playing truthfully with respect to a platform policy $\sigma$. Since playing $p_T$ ensures that the sender remains in the high reputation state, $V(\sigma)$ is also the overall, long-term utility of the sender when playing truthfully. In particular, $V(\sigma)$ is the sender's long-term utility when responding optimally to an incentive-compatible platform policy $\sigma$.

Notice that in the reputation-based setting, the problem has a straightforward solution in the following cases: first, for a given fixed punishment level $\bar{u}$, and for large enough discount factor $\delta$, the policy where the platform requests from the sender a completely truthful behavior on all signals is incentive-compatible and therefore optimal. This follows from the fact that the sender's utility from a deviation approaches $\bar{u}$ as $\delta$ goes to one. Similarly, for a fixed discount factor $\delta$ and for a low enough punishment level $\bar{u}$, the platform can also require a truthful behavior of the sender, which again will be incentive-compatible. We aim to study the platform's optimal signaling policy in the general case, where $\bar{u}$ and $\delta$ are such that the sender might benefit from deviation from the truthful strategy with respect to the platform's request.

\subsection{Optimal Platform Policy Characterization}
\label{subsec:rbpp-optimal-platform}

In this section, we provide a useful characterization of an optimal platform policy in the reputation-based setting, which will then be used to simplify the platform's optimization problem of finding an optimal policy. We start by defining two standard properties of a platform policy:

\begin{definition}
    A platform policy $\sigma \in \Sigma$ is Pareto-efficient if $\forall \tilde{\sigma} \in \Sigma$:
    \begin{enumerate}
        \item $U_S(\sigma, p_T) < U_S(\tilde{\sigma}, p_T) \Rightarrow U_P(\sigma, p_T) > U_P(\tilde{\sigma}, p_T)$
        \item $U_P(\sigma, p_T) < U_P(\tilde{\sigma}, p_T) \Rightarrow U_S(\sigma, p_T) > U_S(\tilde{\sigma}, p_T)$
    \end{enumerate}
\end{definition}

\begin{definition}
    A platform policy $\sigma \in \Sigma$ is lowest-type-targeting if $\forall x \in \operatorname{supp}(\sigma_F)$:
    \begin{equation}
        p^*(x) = \frac{\mu}{1-\mu} \theta^x_{min} \nonumber
    \end{equation}
    where $\theta^x_{min} = \min \operatorname{supp}(x)$.
\end{definition}

That is, a platform policy is lowest-type-targeting if, for every user distribution in its support, the sender's optimal policy is to play according to the lowest type's persuasion threshold (i.e., it does not benefit from increasing the probability of lying at the price of losing the lowest type users). Notice that when a platform policy is lowest-type-targeting, all users always follow the sender's recommendation, hence all products are sold. In Lemma \ref{lem9} we show that this property is equivalent to Pareto-efficiency.

We start with a standard result showing that the set of all optimal platform policies in the reputation-based persuasion platform setting must contain at least one Pareto-efficient policy $\sigma$, for which $U_S(\sigma_F, p^*) = U_S(p^*; x^F)$ (that is, the property of \cite{bergemann2015} is satisfied with respect to the policy conditioned on the fact that the platform does not require truthful behavior). Note that we define an optimal platform policy as an incentive-compatible policy maximizing the platform utility (which is the average user utility) over all incentive-compatible platform policies. That is, denoting by $\Sigma^*$ the set of all optimal platform policies, we have that $\Sigma^* \subseteq \Sigma_{IC}$ where $\Sigma_{IC}$ is the set of all incentive-compatible platform policies. The optimization problem of the platform can be then written as follows:

\begin{equation}
\label{opt}
\begin{aligned}
\max_{\sigma \in \Sigma_{IC}} \quad & U_P(\sigma, p_T)
\end{aligned}
\end{equation}

Note that the problem defined in \eqref{opt} is an infinite-dimensional optimization problem. To simplify the optimization problem, we introduce an optimal platform policy characterization in Theorem \ref{thm3}:

\begin{theorem}
    \label{thm3}
    Denote by $\Sigma^*$ the set of all optimal platform policies. Then, there exists a Pareto-efficient policy $\sigma \in \Sigma^*$ such that $U_S(\sigma_F, p^*) = U_S(p^*; x^F)$.
\end{theorem}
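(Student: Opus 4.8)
The plan is to take an arbitrary optimal policy $\sigma=(x^T,\alpha^T,\sigma_F)\in\Sigma^*$ (with $x^F=\E_{x\sim\sigma_F}[x]$) and massage it, without lowering $U_P(\cdot,p_T)$ and without losing incentive compatibility, first into a lowest-type-targeting policy (hence, by Lemma~\ref{lem9}, Pareto-efficient), and then into one whose non-truthful part $\sigma_F$ is a consumer-surplus-maximizing segmentation of $x^F$. Throughout I would work with the decomposition $U_P(\sigma,p_T)=\alpha^T U_P(p_T;x^T)+(1-\alpha^T)U_P(\sigma_F,p^*)$ (and the analogous one for $U_S$), together with Corollaries~\ref{cor2}--\ref{cor3}, which let me read $U_P(\sigma_F,p^*)$ and $U_S(\sigma_F,p^*)$ as the consumer and producer surpluses $W_C$ and $W_S$ in the market $x^F$, and with Theorem~\ref{thm2}, which says $BBM(x^F)$ maximizes consumer surplus in $x^F$ while leaving the producer exactly its monopoly profit $U_S(p^*;x^F)$.

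\textbf{Step 1: incentive compatibility in closed form.} Since lying off $x^T$ has no consequence and the sender is already greedy there, the only relevant deviation is to play greedily at $x^T$, which earns $U_S(p^*;x^T)$ per period until some user buys a low-quality product at $x^T$ --- an event whose per-period probability $q$ is proportional to $\alpha^T$ --- after which the sender is absorbed at the punishment level $\bar{u}$. Normalizing the discounted stream, the IC condition takes the form $U_S(\sigma,p_T)\ge \bar{u}+\kappa\cdot\frac{\alpha^T\big(U_S(p^*;x^T)-\mu\big)}{q}$, with $\kappa$ depending only on $\delta$; since the factors of $\alpha^T$ in numerator and denominator cancel, this collapses to $U_S(\sigma,p_T)\ge T(x^T)$ for a threshold $T(x^T)$ that depends on $x^T$ (and on $\bar{u},\delta,\mu$) but not on $\sigma_F$ or $\alpha^T$. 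In particular IC sees $\sigma_F$ only through $U_S(\sigma_F,p^*)$, which enters $U_S(\sigma,p_T)=\alpha^T\mu+(1-\alpha^T)U_S(\sigma_F,p^*)$ monotonically.

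\textbf{Step 2: Pareto-efficiency.} For an optimal $\sigma$ the pair $\big(U_S(\sigma_F,p^*),U_P(\sigma_F,p^*)\big)$ must lie on the Pareto frontier of the feasible surplus region of the market $x^F$ --- the segment from $B$ to $C$ in Figure~\ref{bbm-triangle} --- because otherwise one could raise $U_P(\sigma_F,p^*)$ at unchanged $U_S(\sigma_F,p^*)$, which strictly raises $U_P(\sigma,p_T)$ and (by Step 1) leaves IC intact, contradicting optimality. That frontier is traced out by the convex combinations $\lambda\, BBM(x^F)+(1-\lambda)\,\mathrm{FR}(x^F)$ of the $BBM$ segmentation and the fully-revealing segmentation $\mathrm{FR}(x^F)$ --- consumer and producer surpluses are linear under such mixtures, because segments never interact --- and both $BBM(x^F)$ and $\mathrm{FR}(x^F)$ are lowest-type-targeting, as is any mixture of lowest-type-targeting segmentations. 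Replacing $\sigma_F$ by the mixture with the same surplus pair leaves $U_S$, $U_P$ and the IC condition unchanged, so we may assume $\sigma$ is lowest-type-targeting, hence Pareto-efficient by Lemma~\ref{lem9}.

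\textbf{Step 3: the \cite{bergemann2015} property, and the main obstacle.} It remains to slide the non-truthful part all the way to $B$, i.e.\ to make $U_S(\sigma_F,p^*)=U_S(p^*;x^F)$. If the constraint $U_S(\sigma,p_T)\ge T(x^T)$ is slack, then interpolating $\sigma_F$ toward $BBM(x^F)$ along $B$--$C$ (which keeps $x^F$, hence $T(x^T)$, fixed and only raises $U_P$) reaches $B$ before IC can bind, and if $\alpha^T=1$ the policy is fully truthful and the claim is vacuous. The genuinely hard case is an optimal $\sigma$ at which IC binds while $\sigma_F$ sits strictly between $B$ and $C$: there the above interpolation is blocked, and one must instead make a non-local move that simultaneously pushes $\sigma_F$ toward $B$ and re-allocates probability between $x^T$ and $\sigma_F$ (thereby moving $x^F$) so as to keep $U_S(\sigma,p_T)=T(x^T)$, and then show --- using the identity $U_P(\sigma,p_T)=\mu\,\E_{x^*}[\theta]+(1-\alpha^T)\big(U_P(\sigma_F,p^*)-\mu\,\E_{x^F}[\theta]\big)$ --- that this never decreases $U_P(\sigma,p_T)$. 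I expect this last point, which requires controlling how the threshold $T(x^T)$ and the monopoly profit $U_S(p^*;x^F)$ co-move under such re-allocations, to be the crux of the whole argument; Steps 1 and 2 are essentially the \cite{bergemann2015} reduction together with bookkeeping.
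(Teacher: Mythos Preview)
Your Steps 1 and 2 are essentially right and line up with the paper. Step 1's conclusion---that the IC constraint reduces to $V(\sigma)\ge T(x^T)$ with the right-hand side depending only on $x^T$---is exactly Lemma~\ref{lem2} (your informal derivation via ``earns $U_S(p^*;x^T)$ per period until caught'' is a bit loose, since the binding deviation need not be $p^*(x^T)$; but the one-shot-deviation bound for each $p_k$ indeed involves only $F_k(x^T)$ and $V(\sigma)$, so your takeaway is correct). Step 2 is the same Pareto-efficiency reduction the paper does via Lemmas~\ref{lem7}--\ref{lem9} and Corollary~\ref{cor6}.

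The genuine gap is Step 3, and you already flag it: once $\sigma$ is Pareto-efficient and optimal with $\sigma_F$ strictly between $B$ and $C$, IC must bind (otherwise you could slide toward $B$), and you have no mechanism to move further. Your proposed ``non-local move that re-allocates probability between $x^T$ and $\sigma_F$ so as to keep $U_S(\sigma,p_T)=T(x^T)$ and not decrease $U_P$'' is exactly what needs to be proved, and you do not prove it. In particular, controlling how $U_S(p^*;x^F)$ moves as $x^F$ shifts is delicate because $p^*$ jumps; there is no obvious monotonicity argument.

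The paper does \emph{not} attempt a direct construction. It argues by contradiction on the \emph{minimal} $\alpha^T$ in $\Sigma^*\cap\Sigma_P$. The key step---and the idea you are missing---is Lemmas~\ref{lem4} and \ref{lem5}: if $U_S(\sigma_F,p^*)>U_S(p^*;x^F)$, then inside the slice $\Sigma_{\epsilon^*}=\{\tilde\sigma:\tilde x^T=x^T,\ \tilde\alpha^T=(1-\epsilon^*)\alpha^T\}$ one can find both $\sigma_l$ with $V(\sigma_l)<V(\sigma)$ (take $\sigma_F=BBM(x^F_\epsilon)$ and use $x^F_\epsilon\to x^F$) and $\sigma_h$ with $V(\sigma_h)>V(\sigma)$ (full revelation on the non-truthful part, which strictly beats $\sigma'$ by moving $\epsilon\alpha^T$ mass from truthful to fully-revealed). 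A convex combination in $\Sigma_{\epsilon^*}$ hits $V(\sigma)$ exactly, is IC by Corollary~\ref{cor5}, and after a BBM-type Pareto correction has the same $U_P$ as $\sigma$ (Pareto-efficiency plus equal $V$ forces equal $U_P$). That yields an element of $\Sigma^*\cap\Sigma_P$ with strictly smaller $\alpha^T$, a contradiction. Note that this argument never claims the constructed policy sits at $B$; it only drives $\alpha^T$ down, which is enough for the contradiction. This sidestep is precisely the ``crux'' you could not supply.
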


That is, there exists an optimal platform policy for the reputation-based setting, such that conditional on the fact that the platform does not require truthful behavior from the sender, its signaling policy is the same as in the one-shot case, and therefore can be obtained by applying the algorithm of \cite{bergemann2015}. Theorem \ref{thm3} is then used to simplify the optimization problem of finding the optimal platform policy as follows:

\begin{corollary}
\label{cor4}
  The platform's optimization problem can be solved using the following two-step procedure:
  \begin{itemize}
    \item Solve the following finite-dimensional optimization problem: 
    \begin{equation}
    \label{opt-simple}
    \begin{aligned}
    \min_{\sigma=(\alpha, x^T, x^F)} \quad & U_S(\sigma, p_T)\\
    \textrm{s.t.} \quad 
    & 0 \le \alpha \le 1 \\
    & x^T, x^F \in \Delta(\Theta) \\
    & \alpha x^T + (1-\alpha) x^F = x^*\\
    & \sigma \in \Sigma_{IC} \\
    \end{aligned}
    \end{equation}
  \item Apply the algorithm of \cite{bergemann2015} with respect to $x^F$.
  \end{itemize}
\end{corollary}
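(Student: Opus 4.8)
The plan is to derive Corollary~\ref{cor4} from Theorem~\ref{thm3}, the \cite{bergemann2015} guarantee (Theorem~\ref{thm2}), and the reduction of Section~3.3, in three steps: reduce the search for an optimal policy to policies whose non-truthful part is exactly $BBM(x^F)$; show that on this class maximizing $U_P(\sigma,p_T)$ coincides with minimizing $U_S(\sigma,p_T)$; and observe that this class is parametrized by the finite data $(\alpha,x^T,x^F)$, which gives program~\eqref{opt-simple}.

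\emph{Step 1.} By Theorem~\ref{thm3} fix an optimal $\sigma\in\Sigma^*$ that is Pareto-efficient with $U_S(\sigma_F,p^*)=U_S(p^*;x^F)$, and let $\sigma'$ be obtained from $\sigma$ by keeping $\alpha^T$ and $x^T$ fixed and replacing $\sigma_F$ with $BBM(x^F)$; since $BBM(x^F)$ is a segmentation of $x^F$, Bayes-plausibility is preserved. By Theorem~\ref{thm2} (through Corollary~\ref{cor2}), $U_S(\sigma'_F,p^*)=U_S(p^*;x^F)=U_S(\sigma_F,p^*)$, while by Corollary~\ref{cor3}, $U_P(\sigma'_F,p^*)\ge U_P(\sigma_F,p^*)$. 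Viewing the sender as acting in the MDP of Figure~\ref{sender-mdp}, the only payoff-relevant summary of the non-truthful posteriors is their mean $x^F$ together with the per-period greedy payoff $U_S(\sigma_F,p^*)$; hence the truthful value $V(\cdot)$, the best deviation value, and the incentive-compatibility condition are all unchanged, so $\sigma'\in\Sigma_{IC}$ and $U_P(\sigma',p_T)\ge U_P(\sigma,p_T)$. Optimality of $\sigma$ forces equality, so an optimal policy of the form $(\alpha^T,x^T,BBM(x^F))$ exists.

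\emph{Step 2.} Under the profile $(\sigma',p_T)$ the sender never lies at $x^T$, and since $BBM(x^F)$ realizes point $B$ of Figure~\ref{bbm-triangle} it is lowest-type-targeting (Lemma~\ref{lem9}), so no user is ever priced out; thus every user who is recommended the product buys it. Substituting into Lemma~\ref{lem1} gives, for each posterior $x$ in the support, $U_P(p_T;x)+U_S(p_T;x)=\mu\bigl(1+\E_{\theta\sim x}[\theta]\bigr)$, independently of the lying probabilities. Averaging over $\sigma'$ and using $\alpha^T x^T+(1-\alpha^T)x^F=x^*$ yields $U_P(\sigma',p_T)+U_S(\sigma',p_T)=\mu\bigl(1+\E_{\theta\sim x^*}[\theta]\bigr)$, a constant across the whole class of policies of the form $(\alpha^T,x^T,BBM(x^F))$. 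Hence maximizing $U_P(\cdot,p_T)$ over this class is equivalent to minimizing $U_S(\cdot,p_T)=\alpha^T\mu+(1-\alpha^T)U_S(p^*;x^F)$, which depends only on the triple $(\alpha^T,x^T,x^F)$.

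\emph{Step 3.} Every policy of the form $(\alpha^T,x^T,BBM(x^F))$ is uniquely described by $(\alpha,x^T,x^F)$ with $0\le\alpha\le 1$, $x^T,x^F\in\Delta(\Theta)$, and $\alpha x^T+(1-\alpha)x^F=x^*$, the incentive constraint being expressible through this triple by Step~1; minimizing $U_S(\sigma,p_T)$ over this finite-dimensional set is exactly \eqref{opt-simple}, and expanding the $x^F$-coordinate of a minimizer via the \cite{bergemann2015} algorithm recovers an optimal platform policy. The main obstacle is Step~1: one must carefully verify that re-segmenting $x^F$ into $BBM(x^F)$ cannot break incentive-compatibility, i.e.\ that the incentive constraint truly depends on the non-truthful part only through $x^F$ and the scalar $U_S(\sigma_F,p^*)$, which is where the MDP reformulation of the sender's problem and the ``minimum producer surplus'' clause of Theorem~\ref{thm2} must be combined. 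Steps~2 and~3 are then routine bookkeeping.
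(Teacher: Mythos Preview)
Your proposal is correct and follows essentially the same route as the paper. Both arguments rest on the same four ingredients: Theorem~\ref{thm3} supplies an optimal Pareto-efficient policy with $U_S(\sigma_F,p^*)=U_S(p^*;x^F)$; replacing $\sigma_F$ by $BBM(x^F)$ preserves $V(\sigma)$ and hence incentive-compatibility; on the Pareto frontier $U_P+U_S$ is the constant $\mu(1+\E_{\theta\sim x^*}[\theta])$; and therefore maximizing $U_P$ reduces to the finite-dimensional minimization of $U_S$ in~\eqref{opt-simple}.

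Two minor remarks on presentation. First, the paper handles your ``main obstacle'' in Step~1 by citing Corollary~\ref{cor5}, which is exactly the statement that incentive-compatibility depends on the policy only through $x^T$ and $V(\sigma)$; your MDP argument reproduces this, but invoking Lemma~\ref{lem2}/Corollary~\ref{cor5} directly is cleaner than saying that IC depends on $x^F$ (it does not---only $x^T$, $\alpha^T$, and $U_S(\sigma_F,p^*)$ matter). Second, your Step~2 makes explicit the constant-sum identity that the paper leaves implicit in its appeal to ``the point on the Pareto frontier''; this is the content of Lemma~\ref{lem6} and Lemma~\ref{lem9}, and spelling it out as you do is arguably clearer than the paper's one-line justification.
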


Corollary \ref{cor4} follows immediately from Theorem \ref{thm3}: the first stage yields the lowest sender utility that can be achieved by an incentive-compatible platform policy (follows from Theorem \ref{thm3}). Hence, the maximal platform utility is bounded from above by the point on the Pareto frontier in which the Sender achieves the above minimal utility. In order to complete the construction we need to show that this platform utility is indeed achievable.
To see this, we use the same approach as in the one-shot case: we apply the reduction to \cite{bergemann2015}
 with respect to  $x^F$, and achieve the Pareto-efficient utility which is maximal for the platform and does not effect the sender's utility. Note that incentive compatibility does not break since the utility for the sender and $x^T$ are not being modified (follows from Corollary \ref{cor5} below). 

Unlike the problem defined in \eqref{opt}, the simplified problem in \eqref{opt-simple} is a finite-dimensional optimization problem, and is more likely to be solved analytically or numerically. However, the optimization problem is non-convex, and therefore it is still generally difficult (this is due to the fact that the incentive-compatibility constraints are non-convex, see Lemma \ref{lem2}).

\subsection{Proof of Theorem \ref{thm3}}
\label{subsec:rbpp-proof}

To show this characterization we first introduce some auxiliary lemmas which will be used in the proof of Theorem \ref{thm3} (and the proof of Theorem \ref{thm3} will then follow). To begin with, notice that for a given platform policy $\sigma$, $p_T(x)=p^*(x)$ for any posterior $x \neq x^T$. Moreover, it is clear that playing greedily at any non-truthful posterior is optimal for the sender, as it maximizes its one-shot utility and cannot lead to punishment (i.e., the sender being moved by the platform from a high to a low reputation state). Therefore, any potentially profitable deviation from $p_T$ must be conditioned on the sender being at the truthful signal $x^T$. This observation leads to the following alternative definition of incentive compatibility:

\begin{lemma}
\label{lem2}
    A policy of the platform $\sigma \in \Sigma$ is incentive-compatible if and only if for all $k \in [n]:$ 
    \begin{align}
            V(\sigma) \ge \frac{1-\delta}{\delta} \cdot \left(1 - \frac{1 - F_k(x^T)}{F_k(x^T) \theta_k}\right) + \bar{u} \nonumber
    \end{align}
\end{lemma}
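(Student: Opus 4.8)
The plan is to characterize incentive compatibility by comparing the truthful long-term payoff $V(\sigma)$ against the best possible one-shot gain from a deviation at the truthful posterior $x^T$, followed by the punishment stream. As noted in the paragraph preceding the lemma, any potentially profitable deviation from $p_T$ must occur at $x^T$; at every other posterior the greedy play is already both myopically optimal and consequence-free. So $\sigma$ is IC if and only if, for every alternative one-shot action the sender could take at $x^T$ in place of lying with probability zero, the resulting payoff does not exceed $V(\sigma)$.

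First I would parametrize the deviations at $x^T$. Since any optimal (here, deviating) sender response at a given posterior targets some type $\theta_k$, i.e. plays $p(x^T) = p_k = \tau_{\theta_k}$ (this uses Corollary \ref{cor1} / the remark after Lemma \ref{lem1}), it suffices to range over $k \in [n]$. For a fixed $k$, the sender's one-shot utility at $x^T$ from playing $p_k$ is, by Lemma \ref{lem1},
\begin{equation*}
U_S(p_k; x^T) = F_k(x^T)\bigl(\mu + (1-\mu)p_k\bigr),
\end{equation*}
since exactly the mass $F_k(x^T) = \sum_{j=k}^n x^T_j$ of users (those with threshold $\ge p_k$) still buy. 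Meanwhile, on all non-truthful posteriors the sender keeps playing greedily, contributing $(1-\alpha^T)U_S(\sigma_F, p^*)$; writing $V(\sigma) = \alpha^T U_S(0; x^T) + (1-\alpha^T)U_S(\sigma_F,p^*)$ with $U_S(0;x^T) = \mu$ (everyone buys when the sender is fully truthful at $x^T$) lets me express the deviation's first-period total payoff as $V(\sigma) - \alpha^T\mu + \alpha^T U_S(p_k;x^T)$. Then I would observe that a deviation at $x^T$ triggers the move to the low-reputation state precisely when a low-quality product is sold there, which happens with probability $(1-\mu)p_k$ in any given period; a patient sender who intends to deviate eventually gets caught, and since it is myopic-per-period with geometric discounting the value of "deviate forever at $x^T$, greedy elsewhere, until caught, then $\bar u$ thereafter" is a geometric average of the one-shot deviation payoff and $\bar u$. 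Carrying out that geometric sum (weighting the per-period catch probability $(1-\mu)p_k$ appropriately against the discount factor $\delta$) and rearranging the inequality "deviation value $\le V(\sigma)$" into the stated closed form is then routine algebra, using $\mu + (1-\mu)p_k = \mu + (1-\mu)\tfrac{\mu}{1-\mu}\theta_k = \mu(1+\theta_k)$ to simplify.

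The main obstacle I anticipate is getting the discounted-punishment bookkeeping exactly right: one must be careful that the deviation is a \emph{stationary} one (the sender, being myopic in utility but forward-looking in the reputation MDP of Figure \ref{sender-mdp}, has no reason to switch deviation target across periods, so it is enough to consider a fixed $k$ forever), and that the "absorption probability per visit to $x^T$" is $(1-\mu)p_k$ rather than $p_k$ or $(1-\mu)$ alone — the product quality must be low \emph{and} the sender must have chosen to recommend buying. Once the correct effective discounting is identified, collecting terms and isolating $V(\sigma)$ gives
\begin{equation*}
V(\sigma) \ge \frac{1-\delta}{\delta}\left(\frac{\mu}{1-\mu}\cdot\frac{F_k(x^T)-1}{F_k(x^T)p_k} + 1\right) + \bar u
\end{equation*}
for every $k$, and conversely if this holds for all $k$ then no single-target deviation — hence no deviation at all — is profitable, establishing the equivalence. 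I would close by remarking that the right-hand side is not concave in the policy parameters, which is the source of the non-convexity of $\Sigma_{IC}$ referenced after Corollary \ref{cor4}.
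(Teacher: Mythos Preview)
Your reduction to deviations at $x^T$ indexed by $k\in[n]$ is right, and so is the one-shot payoff $U_S(p_k;x^T)=F_k(x^T)(\mu+(1-\mu)p_k)$. But there is a genuine gap in the detection probability. You assert that the ``absorption probability per visit to $x^T$'' is $(1-\mu)p_k$, listing only ``product low quality'' and ``sender chose to recommend''. The platform's trigger, however, is $s_t\in S_T$, $\omega_t=\omega_0$, \emph{and} $a_t=a_1$: the user must actually purchase. A user of type $\theta_j$ drawn from $x^T$ follows the recommendation only when $p_k\le\tau_{\theta_j}$, which occurs with probability $F_k(x^T)$. Hence the correct catch probability, conditional on being at $x^T$ and playing $p_k$, is $(1-\mu)\,p_k\,F_k(x^T)$. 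This is not cosmetic: the factor $F_k(x^T)$ must appear in \emph{both} the one-shot gain and the catch probability for the algebra to collapse to the stated right-hand side; with your probability one obtains instead $V(\sigma)\ge\tfrac{1-\delta}{\delta}\bigl[\tfrac{\mu}{1-\mu}\cdot\tfrac{F_k(x^T)-1}{p_k}+F_k(x^T)\bigr]+\bar u$, which is not the lemma.

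Separately, your route through an ex-ante stationary ``deviate forever until caught'' strategy and a geometric sum is more laborious than what the paper does. The paper works conditionally on being at $x^T$ and applies a one-shot deviation comparison: $(1-\delta)\mu+\delta V(\sigma)$ versus $(1-\delta)F_k(x^T)(\mu+(1-\mu)p_k)+\delta\bigl[(1-\mu)F_k(x^T)p_k\,\bar u+(1-(1-\mu)F_k(x^T)p_k)V(\sigma)\bigr]$. Rearranging this single inequality yields the lemma directly, with no $\alpha^T$, no infinite series, and no need for the substitution $\mu+(1-\mu)p_k=\mu(1+\theta_k)$.
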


A key corollary regarding incentive-compatibility is that if two policies $\sigma, \tilde{\sigma} \in \Sigma$ have the same truthful distribution and guarantee the same sender utility from being truthful, then either both are incentive-compatible or both are not:

\begin{corollary}
\label{cor5}
    Let $\sigma, \tilde{\sigma} \in \Sigma$ such that $x^T = \tilde{x}^T$ and $V(\sigma) = V(\tilde{\sigma})$. Then $\sigma$ is incentive-compatible if and only if $\tilde{\sigma}$ is incentive-compatible.
\end{corollary}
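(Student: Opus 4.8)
The plan is to read off the claim directly from the characterization of incentive-compatibility in Lemma \ref{lem2}. That lemma states that $\sigma\in\Sigma$ is incentive-compatible if and only if, for all $k\in[n]$,
\begin{equation*}
V(\sigma) \ge \frac{1-\delta}{\delta}\cdot\left(\frac{\mu}{1-\mu}\cdot\frac{F_k(x^T)-1}{F_k(x^T)\,p_k}+1\right)+\bar{u}.
\end{equation*}
First I would observe that the right-hand side of this inequality depends on the policy $\sigma$ \emph{only} through the truthful posterior $x^T$ (it enters via the quantities $F_k(x^T)=\sum_{j=k}^n x^T_j$), while all other ingredients --- $\delta$, $\mu$, $\bar{u}$, and the thresholds $p_k=\tau_{\theta_k}$ --- are fixed parameters of the problem instance and do not vary with the policy. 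Second, the left-hand side depends on $\sigma$ only through the scalar $V(\sigma)=U_S(\sigma,p_T)$.

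Now fix $\sigma,\tilde\sigma\in\Sigma$ with $x^T=\tilde x^T$ and $V(\sigma)=V(\tilde\sigma)$. Then for every $k\in[n]$ we have $F_k(x^T)=F_k(\tilde x^T)$, so the right-hand side of the inequality is the same for $\sigma$ and $\tilde\sigma$; and since $V(\sigma)=V(\tilde\sigma)$ the left-hand side is the same as well. Hence for each $k$ the incentive constraint is satisfied by $\sigma$ exactly when it is satisfied by $\tilde\sigma$. Taking the conjunction over all $k\in[n]$ and applying Lemma \ref{lem2} in both directions yields: $\sigma$ is incentive-compatible $\iff$ all $n$ constraints hold for $\sigma$ $\iff$ all $n$ constraints hold for $\tilde\sigma$ $\iff$ $\tilde\sigma$ is incentive-compatible.

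There is essentially no obstacle here; the only thing to be careful about is making explicit that the right-hand side of the Lemma \ref{lem2} inequality genuinely has no dependence on $\sigma$ beyond $x^T$ --- in particular that the set of candidate deviations is indexed by $k\in[n]$ (equivalently by the persuasion thresholds $p_k$), which is determined by the instance $(\Theta,x^*,\mu)$ and not by the chosen policy. Once that is noted, the corollary is immediate.
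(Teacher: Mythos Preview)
Your proposal is correct and is exactly the approach the paper has in mind: the corollary is stated immediately after Lemma~\ref{lem2} without a separate proof, precisely because the characterization there depends on $\sigma$ only through $V(\sigma)$ on the left and $x^T$ on the right. Your write-up simply makes this observation explicit.
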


We continue with the following three technical lemmas, implying that given a platform policy $\sigma \in \Sigma$ such that $U_S(\sigma_F, p^*) > U_S(p^*; x^F)$, there exists policies $\sigma_l, \sigma_h \in \Sigma$ such that $V(\sigma_l) < V(\sigma) < V(\sigma_h)$:

\begin{lemma}
\label{lem3}
    Let $\epsilon > 0$. Given a platform policy $\sigma$, define $\Sigma_\epsilon = \{ \tilde{\sigma} \in \Sigma: \tilde{\alpha}^T = (1-\epsilon) \alpha^T \wedge \tilde{x}^T = x^T \}$. Then for any $\sigma_\epsilon \in \Sigma_\epsilon$, $x^F_\epsilon \to x^F$ as $\epsilon \to 0$.
\end{lemma}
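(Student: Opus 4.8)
The plan is to reduce the claim to a one-line continuity argument about the closed form that Bayes-plausibility forces on $x^F_\epsilon$. First I would unpack the definitions: fix $\sigma_\epsilon \in \Sigma_\epsilon$, so that $\sigma_\epsilon(x^T) = (1-\epsilon)\alpha^T$ and $\tilde x^T = x^T$, and recall that $x^F_\epsilon = \E_{x\sim\sigma_\epsilon}[\,x \mid x \neq x^T\,]$. Splitting the Bayes-plausibility constraint $\sum_{x \in supp(\sigma_\epsilon)}\sigma_\epsilon(x)\,x = x^*$ into the truthful atom and the rest gives $\sum_{x\neq x^T}\sigma_\epsilon(x)\,x = x^* - (1-\epsilon)\alpha^T x^T$ and $\sum_{x\neq x^T}\sigma_\epsilon(x) = 1 - (1-\epsilon)\alpha^T$, hence
\[
x^F_\epsilon \;=\; \frac{x^* - (1-\epsilon)\,\alpha^T\,x^T}{1 - (1-\epsilon)\,\alpha^T}.
\]
Two things are worth flagging here: this expression does not depend on which $\sigma_\epsilon \in \Sigma_\epsilon$ was chosen (so the phrase ``for any $\sigma_\epsilon$'' is meaningful), and the same computation applied to $\sigma$ itself yields $x^F = (x^* - \alpha^T x^T)/(1-\alpha^T)$.

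Next I would invoke continuity. Assuming $\alpha^T < 1$ (if $\alpha^T = 1$ then $x^F$ is undefined and the statement is vacuous), the denominator satisfies $1 - (1-\epsilon)\alpha^T \ge 1 - \alpha^T > 0$ for every $\epsilon \ge 0$, so $\epsilon \mapsto x^F_\epsilon$ is the ratio of an $\R^n$-valued affine function and a strictly positive affine scalar function, hence continuous at $\epsilon = 0$. Letting $\epsilon \to 0$ therefore gives
\[
x^F_\epsilon \;\longrightarrow\; \frac{x^* - \alpha^T x^T}{1-\alpha^T} \;=\; x^F ,
\]
the convergence being in the Euclidean topology on $\Delta(\Theta)\subset\R^n$, equivalently coordinatewise.

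I do not expect any real obstacle here; the only items needing a word of justification are bookkeeping. One is well-definedness: for $\sigma_\epsilon\in\Sigma_\epsilon$ the quantity $x^F_\epsilon$ is automatically a point of $\Delta(\Theta)$, being a convex combination of the non-truthful posteriors in $supp(\sigma_\epsilon)$; equivalently one checks directly that $x^* - (1-\epsilon)\alpha^T x^T \ge 0$ coordinatewise (since $x^* \ge \alpha^T x^T \ge (1-\epsilon)\alpha^T x^T$, using $x^F \ge 0$) and that its coordinates sum to $1-(1-\epsilon)\alpha^T$. The other is simply keeping track of the degenerate cases $\alpha^T \in \{0,1\}$: at $\alpha^T = 0$ there is nothing to prove, as $x^F_\epsilon = x^* = x^F$ for all $\epsilon$, and $\alpha^T = 1$ is excluded as noted. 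The essential content of the lemma is just that the residual market $x^F$ depends continuously on the mass allocated to the truthful posterior, which is immediate once the plausibility constraint is written out explicitly.
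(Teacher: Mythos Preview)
Your proof is correct and follows essentially the same approach as the paper: derive the closed form of $x^F_\epsilon$ from Bayes-plausibility and then let $\epsilon\to 0$. Your expression $x^F_\epsilon = \dfrac{x^* - (1-\epsilon)\alpha^T x^T}{1-(1-\epsilon)\alpha^T}$ is exactly the paper's $\dfrac{\epsilon\alpha^T x^T + (1-\alpha^T)x^F}{1-(1-\epsilon)\alpha^T}$ after substituting $x^* = \alpha^T x^T + (1-\alpha^T)x^F$.
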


\begin{lemma}
\label{lem4}
    Let $\sigma \in \Sigma$ such that $U_S(\sigma_F, p^*) > U_S(p^*; x^F)$. Then, there exists some $0 < \epsilon^* < 1$ and $\sigma_l \in \Sigma_{\epsilon^*}$ such that $V(\sigma_l) < V(\sigma)$.
\end{lemma}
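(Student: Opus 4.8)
The plan is to exhibit $\sigma_l$ explicitly: shrink the weight on the truthful posterior by a small factor $\epsilon^*>0$, and segment the freed-up non-truthful mass by the consumer-optimal segmentation of \cite{bergemann2015}. The first thing I would record is a clean decomposition of the sender's truthful utility. Since $p_T(x^T)=0$ and $p_j=\tau_{\theta_j}>0$ for every $j$, Lemma \ref{lem1} gives $U_S(p_T;x^T)=\mu\sum_j x^T_j=\mu$, so that
\begin{center}
$V(\sigma)=U_S(\sigma,p_T)=\alpha^T\mu+(1-\alpha^T)\,U_S(\sigma_F,p^*).$
\end{center}
Note the hypothesis $U_S(\sigma_F,p^*)>U_S(p^*;x^F)$ presupposes $\alpha^T<1$ (otherwise $\sigma_F$ and $x^F$ are undefined), so $1-\alpha^T>0$.

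Next, for $\epsilon\in(0,1)$ let $x^F_\epsilon$ be the mean of the non-truthful part forced by Bayes-plausibility, as computed in Lemma \ref{lem3}, and let $\sigma_\epsilon\in\Sigma_{\epsilon}$ be the policy that puts weight $(1-\epsilon)\alpha^T$ on $x^T$ and distributes the remaining mass $1-(1-\epsilon)\alpha^T$ according to $BBM(x^F_\epsilon)$, the consumer-surplus-maximizing segmentation of the market $x^F_\epsilon$. One checks directly (using Lemma \ref{lem3}) that this $\sigma_\epsilon$ is indeed Bayes-plausible with respect to $x^*$, hence lies in $\Sigma_\epsilon$. By Theorem \ref{thm2}, transported through the reduction of Section 3.3 (Corollary \ref{cor2}), segmenting $x^F_\epsilon$ by $BBM(x^F_\epsilon)$ leaves the sender with exactly the single-market monopoly utility of $x^F_\epsilon$, namely $U_S(p^*;x^F_\epsilon)$. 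Therefore
\begin{center}
$V(\sigma_\epsilon)=(1-\epsilon)\alpha^T\mu+\bigl(1-(1-\epsilon)\alpha^T\bigr)\,U_S(p^*;x^F_\epsilon).$
\end{center}

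The last step is a continuity argument. By Lemma \ref{lem3}, $x^F_\epsilon\to x^F$ as $\epsilon\to 0$; and $U_S(p^*;x)=\max_{k\in[n]} v_k F_k(x)$ is a maximum of finitely many linear functions of $x$, hence continuous, so $U_S(p^*;x^F_\epsilon)\to U_S(p^*;x^F)$. Combined with $(1-\epsilon)\alpha^T\to\alpha^T$, this yields $V(\sigma_\epsilon)\to\alpha^T\mu+(1-\alpha^T)\,U_S(p^*;x^F)$, which, since $1-\alpha^T>0$ and $U_S(p^*;x^F)<U_S(\sigma_F,p^*)$, is strictly below $V(\sigma)$. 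Hence there is $\epsilon^*\in(0,1)$ with $V(\sigma_{\epsilon^*})<V(\sigma)$, and setting $\sigma_l:=\sigma_{\epsilon^*}$ finishes the argument.

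I expect no deep obstacle here; the steps are essentially routine once the construction is fixed. The two points that need care are (i) the identity $U_S(p_T;x^T)=\mu$ that makes the decomposition of $V(\cdot)$ linear in $\alpha^T$, and (ii) the appeal to Theorem \ref{thm2}/Corollary \ref{cor2} guaranteeing that $BBM(x^F_\epsilon)$ actually drives the sender's continuation utility down to the monopoly value $U_S(p^*;x^F_\epsilon)$ — this is exactly what forces the limit to lie strictly below $V(\sigma)$. A harmless bookkeeping remark: if $BBM(x^F_\epsilon)$ happens to assign mass to $x^T$ itself, that mass is merged into the truthful atom, which changes none of the utility computations.
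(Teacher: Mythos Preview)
Your argument is correct and follows the paper's proof essentially step for step: define $\sigma_\epsilon\in\Sigma_\epsilon$ by applying $BBM(x^F_\epsilon)$ on the non-truthful side, use Theorem~\ref{thm2} to pin the sender's continuation value at $U_S(p^*;x^F_\epsilon)$, and then invoke Lemma~\ref{lem3} plus continuity of $x\mapsto U_S(p^*;x)$ to pass to the limit. Your write-up is in fact slightly more explicit than the paper's (the identity $U_S(p_T;x^T)=\mu$, the observation $\alpha^T<1$, and the reason $U_S(p^*;\cdot)$ is continuous are all spelled out), and the final bookkeeping remark about a possible collision with $x^T$ is a nicety the paper omits; note that rather than merging, it is cleanest to keep the colliding atom as a separate non-truthful signal inducing the same posterior, which the signal-based formulation permits.
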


\begin{lemma}
\label{lem5}
    Let $\sigma \in \Sigma$ such that $U_S(\sigma_F, p^*) > U_S(p^*; x^F)$. Then, for any $0 < \epsilon < 1$, there exists $\sigma_h \in \Sigma_{\epsilon}$ such that $V(\sigma_h) > V(\sigma)$.
\end{lemma}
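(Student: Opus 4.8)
The plan is to run the argument of Lemma~\ref{lem4} in reverse: there the residual market $x^F_\epsilon$ is split according to the consumer-surplus-maximizing segmentation, which makes the sender's surplus minimal; here I would instead split it according to the \emph{fully revealing} segmentation, which makes the sender's surplus maximal. The point is that $x^T$ is the worst conceivable segment for the sender --- there it is forced to play $p_T(x^T)=0$ and collects only $\mu$ per user, whereas on a degenerate posterior $\delta_{\theta_j}$ it collects $v_j=\mu(1+\theta_j)>\mu$ (indeed on \emph{any} market it collects strictly more than $\mu$, since the price $v_1$ always sells to all consumers). So transferring probability mass off $x^T$ and onto revealed singletons must strictly raise the sender's truthful payoff $V$.

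Concretely, note first that $\alpha^T=\sigma(x^T)>0$ since $x^T\in supp(\sigma)$. Recall $V(\sigma)=\alpha^T\mu+(1-\alpha^T)U_S(\sigma_F,p^*)$, and that for any $\sigma_h\in\Sigma_\epsilon$ Bayes-plausibility determines its residual mean $x^F_\epsilon$ (exactly as in Lemma~\ref{lem3}) and $V(\sigma_h)=(1-\epsilon)\alpha^T\mu+(1-(1-\epsilon)\alpha^T)U_S((\sigma_h)_F,p^*)$. I would take $(\sigma_h)_F$ to be the fully revealing segmentation of $x^F_\epsilon$, putting weight $(x^F_\epsilon)_j$ on $\delta_{\theta_j}$ for each $j$. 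Then $U_S((\sigma_h)_F,p^*)=\sum_j(x^F_\epsilon)_j v_j=\mu\left(1+\sum_j(x^F_\epsilon)_j\theta_j\right)$, which is the largest sender surplus any split of $x^F_\epsilon$ can yield (it equals the full social surplus, which bounds producer surplus from above); the same bound gives $U_S(\sigma_F,p^*)\le\mu\left(1+\sum_j x^F_j\theta_j\right)$. Plugging in the Bayes-plausibility identities $(1-\alpha^T)x^F=x^*-\alpha^T x^T$ and $(1-(1-\epsilon)\alpha^T)x^F_\epsilon=x^*-(1-\epsilon)\alpha^T x^T$ and simplifying yields
\[ V(\sigma_h)-V(\sigma)\ \ge\ \epsilon\,\alpha^T\,\mu\sum_{j=1}^{n} x^T_j\theta_j\ >\ 0, \]
using $\theta_j>0$ for all $j$ and $\alpha^T,\epsilon>0$; this is the desired $\sigma_h$.

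The delicate point --- and the only real obstacle --- is verifying that the $\sigma_h$ just constructed genuinely lies in $\Sigma_\epsilon$, i.e.\ that its truthful weight equals \emph{exactly} $(1-\epsilon)\alpha^T$. This can fail only when $x^T$ happens to be a point mass $\delta_{\theta_{j_0}}$ and $(x^F_\epsilon)_{j_0}>0$, since then the fully revealing split of $x^F_\epsilon$ re-creates an atom at the posterior $\delta_{\theta_{j_0}}=x^T$, which the model must count as a truthful signal, pushing the truthful weight above $(1-\epsilon)\alpha^T$. This is precisely where the hypothesis $U_S(\sigma_F,p^*)>U_S(p^*;x^F)$ enters: it excludes $x^F=\delta_{\theta_{j_0}}$ (for which both sides would equal $v_{j_0}$), so $x^F$ --- and hence also $x^*$ and $x^F_\epsilon$ --- carries positive mass on some type $\theta_{j'}$ with $j'\neq j_0$. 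I would then perturb the split: instead of revealing $\theta_{j_0}$'s mass as the singleton $\delta_{\theta_{j_0}}$, blend it with a small slice $b>0$ of $\theta_{j'}$'s mass into the single market $\frac{(x^F_\epsilon)_{j_0}\delta_{\theta_{j_0}}+b\,\delta_{\theta_{j'}}}{(x^F_\epsilon)_{j_0}+b}$ --- which has two types in its support and so differs from $x^T$ --- and reveal the remaining $\theta_{j'}$-mass as before. This restores the truthful weight to $(1-\epsilon)\alpha^T$, and it lowers the sender's surplus by at most $b\,|v_{j'}-v_{j_0}|$ relative to full revelation, so for $b$ small enough the strict slack in the displayed inequality survives and $V(\sigma_h)>V(\sigma)$ still holds. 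In the generic case, where $x^T$ is not a point mass, no such patch is needed and the construction of the second paragraph works verbatim.
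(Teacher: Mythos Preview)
Your argument is correct and follows the same core idea as the paper's proof: construct $\sigma_h\in\Sigma_\epsilon$ by fully revealing the user type on the non-truthful part $x^F_\epsilon$, so that the sender captures the entire surplus there. The paper organizes the same computation through an intermediate policy $\sigma'$ (full revelation with the original $\alpha^T$) and invokes Blackwell's theorem to obtain $V(\sigma)\le V(\sigma')$, then checks $V(\sigma')<V(\sigma'')$ via the identity $W_T>\mu$; you collapse these two steps by bounding $U_S(\sigma_F,p^*)$ directly by the full social surplus $\mu(1+\sum_j x^F_j\theta_j)$ and subtracting, arriving at the same slack $\epsilon\,\alpha^T\mu\sum_j x^T_j\theta_j$. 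Your route is marginally more elementary (no appeal to Blackwell) and you are more careful than the paper about the degenerate case $x^T=\delta_{\theta_{j_0}}$, which the paper's construction of $\sigma''$ silently ignores; your perturbation fix for that case is correct, and it is indeed the only place the hypothesis $U_S(\sigma_F,p^*)>U_S(p^*;x^F)$ is needed---the paper's own proof never uses it.
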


We now turn to introduce a set of lemmas dealing with the connections between Pareto-efficiency, lowest-type-targeting, and optimality of platform policies:

\begin{lemma}
\label{lem6}
    For any lowest-type-targeting $\sigma \in \Sigma$:
    \begin{enumerate}
        \item $U_S(\sigma_F, p^*) = \mu + \frac{1-\mu}{1-\alpha^T} \sum_{i=1}^m \alpha^i p^*(x^i)$
        \item $V(\sigma) = U_S(\sigma, p_T) = \mu + (1-\mu) \sum_{i=1}^m \alpha^i p^*(x^i)$
        \item $U_P(\sigma_F, p^*) = \frac{\mu}{1-\alpha^T} \sum_{i=1}^m\alpha^i (\sum_{j=1}^n x^i_j \theta_j) - \frac{1-\mu}{1-\alpha^T} \sum_{i=1}^m \alpha^i p^*(x^i)$
        \item $U_P(\sigma, p_T) = \mu \sum_{j=1}^n x^*_j \theta_j - (1-\mu) \sum_{i=1}^m \alpha^i p^*(x^i)$
    \end{enumerate}
\end{lemma}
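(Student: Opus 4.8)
The plan is to establish all four identities by direct substitution into the closed-form utility expressions of Lemma \ref{lem1} together with the decomposition formulas for $U_S(\sigma,p_T)$ and $U_P(\sigma,p_T)$, exploiting the one feature that distinguishes a lowest-type-targeting policy: every indicator $\ind_{p^*(x)\le\tau_{\theta_j}}$ occurring in those formulas collapses to $1$.

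The first step is to record the key observation. Fix $x^i\in supp(\sigma_F)$. By the lowest-type-targeting hypothesis, $p^*(x^i)=\frac{\mu}{1-\mu}\theta^{x^i}_{min}=\tau_{\theta^{x^i}_{min}}$, and since $\theta\mapsto\tau_\theta$ is increasing and $\theta^{x^i}_{min}\le\theta_j$ for every $\theta_j\in supp(x^i)$, we get $\ind_{p^*(x^i)\le\tau_{\theta_j}}=1$ whenever $x^i_j>0$. Hence $\sum_{j=1}^n x^i_j\ind_{p^*(x^i)\le\tau_{\theta_j}}=\sum_{j=1}^n x^i_j=1$ and $\sum_{j=1}^n x^i_j\theta_j\ind_{p^*(x^i)\le\tau_{\theta_j}}=\sum_{j=1}^n x^i_j\theta_j$. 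I would also note the normalization $\sigma_F(x^i)=\alpha^i/(1-\alpha^T)$ and the identity $\sum_{i=1}^m\alpha^i=1-\alpha^T$.

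Given this, parts 1 and 3 are immediate: substituting the collapsed indicators into the Lemma \ref{lem1} expression $U_S(\sigma_F,p^*)=\sum_{i=1}^m\frac{\alpha^i}{1-\alpha^T}\sum_{j=1}^n x^i_j\ind_{p^*(x^i)\le\tau_{\theta_j}}(\mu+(1-\mu)p^*(x^i))$ leaves $\sum_{i=1}^m\frac{\alpha^i}{1-\alpha^T}(\mu+(1-\mu)p^*(x^i))$, whose $\mu$-part equals $\mu$ because $\sum_i\alpha^i=1-\alpha^T$, giving part 1; the identical computation for $U_P(\sigma_F,p^*)$, now with the factor $\sum_j x^i_j\theta_j$ replacing the constant term, gives part 3. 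For parts 2 and 4 I would invoke the decompositions $U_S(\sigma,p_T)=\alpha^T U_S(p_T;x^T)+(1-\alpha^T)U_S(\sigma_F,p_T)$ and $U_P(\sigma,p_T)=\alpha^T U_P(p_T;x^T)+(1-\alpha^T)U_P(\sigma_F,p_T)$. Since $p_T(x^T)=0\le\tau_{\theta_j}$ for all $j$, Lemma \ref{lem1} yields $U_S(p_T;x^T)=\mu$ and $U_P(p_T;x^T)=\mu\sum_j x^T_j\theta_j$; moreover $p_T$ and $p^*$ agree on every posterior in $supp(\sigma_F)$, so $U_S(\sigma_F,p_T)=U_S(\sigma_F,p^*)$ and likewise for $U_P$. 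Plugging parts 1 and 3 into the decompositions and letting the $\mu$-terms combine yields part 2, and for part 4 the last step is Bayes-plausibility: $\alpha^T x^T+\sum_{i=1}^m\alpha^i x^i=x^*$ gives $\alpha^T\sum_j x^T_j\theta_j+\sum_{i=1}^m\alpha^i\sum_j x^i_j\theta_j=\sum_j x^*_j\theta_j$.

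There is no genuine obstacle here; the argument is entirely bookkeeping. The only points needing care are confirming the indicators really equal $1$ — which is precisely where the lowest-type-targeting hypothesis and the monotonicity of $\tau_\theta$ are used — and tracking the conditioning normalization $\sigma_F(x^i)=\alpha^i/(1-\alpha^T)$ so that the stray factors of $1-\alpha^T$ cancel as claimed; everything else is rearrangement of finite sums.
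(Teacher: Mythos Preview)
Your proposal is correct and follows essentially the same route as the paper: both arguments hinge on the observation that lowest-type-targeting forces $I_j(x^i,p^*)=x^i_j$ (equivalently, the indicators collapse to $1$ on $supp(x^i)$), then substitute into the Lemma~\ref{lem1} expressions for parts 1 and 3, and finally plug these into the decompositions $U_S(\sigma,p_T)=\alpha^T\mu+(1-\alpha^T)U_S(\sigma_F,p^*)$ and $U_P(\sigma,p_T)=\alpha^T\mu\sum_j x^T_j\theta_j+(1-\alpha^T)U_P(\sigma_F,p^*)$, with Bayes-plausibility closing part 4. If anything, you are slightly more explicit than the paper in justifying $U_S(p_T;x^T)=\mu$ and the agreement of $p_T$ with $p^*$ on $supp(\sigma_F)$.
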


\begin{lemma}
\label{lem7}
    For any platform policy $\sigma \in \Sigma$ which is not lowest-type-targeting, there exists a platform policy $\tilde{\sigma} \in \Sigma$ such that $x^T = \tilde{x}^T$, $\alpha^T = \tilde{\alpha}^T$, $U_P(\tilde{\sigma},p_T) \ge U_P(\sigma,p_T)$ and $U_S(\tilde{\sigma},p_T) > U_S(\sigma,p_T)$.
\end{lemma}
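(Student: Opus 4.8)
The plan is to improve $\sigma$ by a \emph{local refinement} of one "bad" posterior, touching neither $x^T$ nor $\alpha^T$. Since $\sigma$ is not lowest-type-targeting, by definition there is a posterior $x^i \in supp(\sigma_F)$ with $p^*(x^i) \neq \frac{\mu}{1-\mu}\theta_{j_0}$, where $\theta_{j_0} = \min supp(x^i)$; because the sender can always sell to everyone in $x^i$ by charging $\frac{\mu}{1-\mu}\theta_{j_0}$ (and charging strictly less is strictly worse), this forces $p^*(x^i) = p_k$ for some $k > j_0$, with $x^i_{j_0} > 0$. I would then fix $\lambda \in (0, x^i_{j_0}]$, write $x^i = \lambda\, e_{j_0} + (1-\lambda)\pi'$ with $e_{j_0}$ the point mass at $\theta_{j_0}$ and $\pi'_j = (x^i_j - \lambda\ind_{j=j_0})/(1-\lambda)$, and define $\tilde\sigma$ to agree with $\sigma$ everywhere except that the atom at $x^i$ of weight $\alpha^i$ is replaced by atoms at $e_{j_0}$ and $\pi'$ of weights $\lambda\alpha^i$ and $(1-\lambda)\alpha^i$. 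Since $\lambda e_{j_0} + (1-\lambda)\pi' = x^i$ the mean is preserved, so $\tilde\sigma \in \Sigma$ with $\tilde{x}^T = x^T$ and $\tilde{\alpha}^T = \alpha^T$ (one may pick $\lambda$ so that $e_{j_0}, \pi' \neq x^T$).

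Next I would pin down the sender's greedy response on the two new atoms: trivially $p^*(e_{j_0}) = p_{j_0}$, and I claim $p^*(\pi') = p_k$ still. Removing mass only from the lowest present type $\theta_{j_0}$ leaves the demand $F_i(\cdot)$ at every threshold $p_i$ with $i > j_0$ unchanged up to the common factor $1/(1-\lambda)$, and strictly decreases it at every $p_i$ with $i \le j_0$; since $p_k$ (with $k > j_0$) was already the smallest revenue-maximizing threshold for $x^i$, the set of optimal thresholds is unaffected, and the min-tie-breaking rule again selects $p_k$.

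Then, using the closed forms of Lemma~\ref{lem1} block by block (only the $x^i$-block changes), I would compute the effect. Under $p_k$ the type $\theta_{j_0}$ does not buy in $x^i$ and contributes $0$ to both the platform and the sender; on the new point-mass block it buys at its own threshold $p_{j_0}$, still contributing $0$ to the platform but $\mu(1+\theta_{j_0}) > 0$ to the sender. The block $\pi'$ is exactly $x^i$ with the $\theta_{j_0}$-mass removed and renormalized, and the sender still charges $p_k$ there, so the platform and sender utilities on $\pi'$ equal $1/(1-\lambda)$ times those of the old $x^i$-block. Adding the weighted contributions, the platform's total utility is \emph{unchanged}, $U_P(\tilde\sigma, p_T) = U_P(\sigma, p_T)$, while the sender's increases by $\lambda\alpha^i\,\mu(1+\theta_{j_0}) > 0$, giving the two required (in)equalities.

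The crux is choosing the right refinement. Fully revealing $x^i$ makes every type a price-taker and in general strictly \emph{lowers} the platform's utility, while replacing $x^i$ by $BBM(x^i)$ leaves the sender's utility unchanged rather than strictly increasing it; only the asymmetric "peel off a sliver of the lowest type" split has both properties, precisely because that type was already getting zero surplus (so the platform loses nothing) yet generates positive revenue for the sender at its own threshold. The one remaining technical point is the invariance of the sender's optimal threshold on the residual market $\pi'$, which is the demand-monotonicity observation above; everything else is a routine substitution into Lemma~\ref{lem1}.
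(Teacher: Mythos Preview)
Your argument is correct and uses a genuinely different refinement from the paper's. The paper splits the bad posterior $x$ at the \emph{optimal} threshold $p_k$: all types $j\ge k$ go into a segment $y$ (on which the sender still plays $p_k$) and all types $j<k$ go into a segment $z$ (on which the sender plays some $p^*(z)<p_k$). The platform weakly gains because the sub-threshold types now face a lower threshold in $z$, and the sender strictly gains because those types, who previously contributed zero, now generate positive revenue on $z$. You instead peel off only a fraction of the single \emph{lowest} type $\theta_{j_0}$ as a point mass and keep the residual $\pi'$ at the same optimal threshold $p_k$. Your split leaves the platform utility \emph{exactly} unchanged (the lowest type earns zero surplus both before and after) and yields an explicit sender gain $\lambda\alpha^i\mu(1+\theta_{j_0})$, so the bookkeeping is lighter; the paper's coarser split can strictly help the platform but needs a short inequality argument for the sender side.

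Two small points. First, for $i\le j_0$ the renormalized demand $F_i(\pi')$ is not ``strictly decreased'': since $\theta_{j_0}=\min supp(x^i)$ one has $F_i(x^i)=1$ there, hence $F_i(\pi')=(1-\lambda)/(1-\lambda)=1$ as well. What actually forces $p^*(\pi')=p_k$ is that all revenues at thresholds $i>j_0$ scale up by $1/(1-\lambda)>1$ while the revenue at $p_{j_0}$ stays fixed, and $p_k$ already \emph{strictly} beat $p_{j_0}$ in $x^i$ by the min-tie-breaking rule; so your conclusion stands. Second, $e_{j_0}$ does not depend on $\lambda$, so your parenthetical cannot force $e_{j_0}\neq x^T$; this is harmless, however, since the new atom can be carried by a fresh signal distinct from the truthful one even when the induced posteriors coincide (the paper's construction glosses over the same point).
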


\begin{lemma}
\label{lem8}
    There exists an optimal policy $\sigma \in \Sigma^*$ which is lowest-type-targeting.
\end{lemma}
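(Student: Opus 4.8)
I would prove Lemma~\ref{lem8} by an ``extremal'' argument: start from an arbitrary optimal policy $\sigma \in \Sigma^*$ and keep applying Lemma~\ref{lem7} to repair posteriors on which the sender targets strictly above the lowest type present in the support, showing that every repair keeps us inside $\Sigma^*$ and that the process terminates at a lowest-type-targeting policy.

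\textbf{Why one repair is safe.} Fix $\sigma\in\Sigma^*$ that is not lowest-type-targeting, and let $\tilde\sigma$ be the policy produced by Lemma~\ref{lem7}, which splits one offending posterior $x$ into $y,z$ and leaves everything else intact; in particular $\tilde x^T = x^T$, $\tilde\alpha^T=\alpha^T$, $U_P(\tilde\sigma,p_T)\ge U_P(\sigma,p_T)$ and $V(\tilde\sigma)=U_S(\tilde\sigma,p_T)>U_S(\sigma,p_T)=V(\sigma)$. The first point to check is incentive compatibility of $\tilde\sigma$: by Lemma~\ref{lem2}, a policy is incentive-compatible precisely when $V(\cdot)$ meets a threshold depending only on $x^T$, so since $x^T$ is unchanged and $V$ has only increased, $\tilde\sigma\in\Sigma_{IC}$ (this is the one-sided analogue of Corollary~\ref{cor5}, read off directly from Lemma~\ref{lem2}). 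Since $\sigma$ maximizes $U_P(\cdot,p_T)$ over $\Sigma_{IC}$ and $\tilde\sigma\in\Sigma_{IC}$, we must have $U_P(\tilde\sigma,p_T)=U_P(\sigma,p_T)$, hence $\tilde\sigma\in\Sigma^*$.

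\textbf{Why the process terminates.} Here I would track the integer potential $\Psi(\sigma):=\sum_{x}|supp(x)|$, summed over the non-truthful posteriors $x\in supp(\sigma_F)$ that are \emph{not} lowest-type-targeting. In a Lemma~\ref{lem7} split at an offending $x$ with $p^*(x)=\frac{\mu}{1-\mu}\theta_j$, the block $y$ has $\theta^y_{min}=\theta_j$ and $p^*(y)=\frac{\mu}{1-\mu}\theta_j$ (using $x_j>0$, which holds since the sender never optimally targets a type absent from the support), so $y$ is lowest-type-targeting and contributes nothing to $\Psi$; and $supp(z)\subseteq\{\theta_1,\dots,\theta_{j-1}\}\cap supp(x)$, so $|supp(z)|\le|supp(x)|-1$. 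Hence each split lowers $\Psi$ by at least $1$, and since $\Psi\ge 0$ the chain of repairs is finite; the terminal policy has $\Psi=0$, i.e.\ it is lowest-type-targeting, and by the previous paragraph (applied inductively) it lies in $\Sigma^*$, which is exactly the claim.

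\textbf{Main obstacle.} The delicate part is termination: naive potentials such as $\sum_x \sigma(x)\,|supp(x)|$ can fail, because splitting spawns posteriors of ever-smaller mass, so the per-step decrease need not be bounded away from $0$. The fix is to count support sizes \emph{unweighted} and to exploit that the only residual offending piece $z$ created by a split has strictly smaller support, while the companion piece $y$ is permanently lowest-type-targeting and inert; the remaining ingredients (incentive compatibility being preserved, the platform value staying at its maximum) are then routine bookkeeping on top of Lemmas~\ref{lem2} and~\ref{lem7}.
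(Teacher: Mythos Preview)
Your proposal is correct and follows essentially the same route as the paper: repeatedly apply Lemma~\ref{lem7}, check that incentive compatibility and optimality are preserved at each step, and argue termination. The paper simply asserts that ``this process can only be repeated a finite number of times'' and (slightly imprecisely) cites Corollary~\ref{cor5} for IC preservation; your version is more careful on both points, invoking the one-sided monotonicity in Lemma~\ref{lem2} directly and supplying the explicit potential $\Psi$, but the underlying argument is the same.
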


\begin{lemma}
    \label{lem9}
    A platform policy $\sigma \in \Sigma$ is lowest-type-targeting if and only if it is Pareto-efficient.
\end{lemma}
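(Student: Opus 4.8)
The plan is to prove the two implications separately. The direction ``Pareto-efficient $\Rightarrow$ lowest-type-targeting'' will be immediate by contraposition from Lemma~\ref{lem7}, and the direction ``lowest-type-targeting $\Rightarrow$ Pareto-efficient'' will rest on a total-welfare identity.

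First, for ``Pareto-efficient $\Rightarrow$ lowest-type-targeting'' I would argue by contraposition. Suppose $\sigma \in \Sigma$ is \emph{not} lowest-type-targeting. Lemma~\ref{lem7} then yields a policy $\tilde\sigma \in \Sigma$ with $U_P(\tilde\sigma, p_T) \ge U_P(\sigma, p_T)$ and $U_S(\tilde\sigma, p_T) > U_S(\sigma, p_T)$. Since $U_S(\sigma,p_T) < U_S(\tilde\sigma,p_T)$ while $U_P(\sigma,p_T) \le U_P(\tilde\sigma,p_T)$, the first requirement in the definition of Pareto-efficiency fails for this $\tilde\sigma$, so $\sigma$ is not Pareto-efficient.

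For the converse, the key step is the identity: for every $\sigma \in \Sigma$,
\[
U_S(\sigma, p_T) + U_P(\sigma, p_T) = \mu\Big(1 + \sum_{j=1}^n x^*_j \theta_j\Big) - \mu \sum_{x \in supp(\sigma)} \sigma(x) \sum_{j=1}^n x_j (1+\theta_j)\,\ind_{p_T(x) > p_j}.
\]
I would obtain this by adding the closed forms of Lemma~\ref{lem1} with $p = p_T$: inside each summand the price terms $\pm(1-\mu)p_T(x)$ cancel, leaving $\mu(1+\theta_j)$ weighted by the ``user buys'' indicator $\ind_{p_T(x) \le p_j}$; then rewrite $\ind_{p_T(x)\le p_j} = 1 - \ind_{p_T(x) > p_j}$ and use Bayes-plausibility $\sum_x \sigma(x) x = x^*$. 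Writing $\bar W := \mu(1 + \sum_j x^*_j\theta_j)$, this shows $U_S(\sigma,p_T) + U_P(\sigma,p_T) \le \bar W$ for all $\sigma \in \Sigma$, with equality exactly when $p_T(x) \le p_j$ for every $x \in supp(\sigma)$ and every $j$ with $x_j > 0$. For $x = x^T$ this holds automatically since $p_T(x^T) = 0$; for $x \ne x^T$ it says $p^*(x) \le \frac{\mu}{1-\mu}\theta^x_{min}$, and since the greedy sender never prices strictly below the lowest present type's threshold (lowering the price below $\frac{\mu}{1-\mu}\theta^x_{min}$ strictly decreases $U_S(\cdot;x)$ without enlarging the buyer set, so $p^*(x) \ge \frac{\mu}{1-\mu}\theta^x_{min}$ always), equality holds if and only if $\sigma$ is lowest-type-targeting. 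Now if $\sigma$ is lowest-type-targeting then $U_S(\sigma,p_T) + U_P(\sigma,p_T) = \bar W$ (this also drops out of parts 2 and 4 of Lemma~\ref{lem6}), whereas every $\tilde\sigma \in \Sigma$ has $U_S(\tilde\sigma,p_T)+U_P(\tilde\sigma,p_T) \le \bar W$. Hence if $U_S(\sigma,p_T) < U_S(\tilde\sigma,p_T)$ then $U_P(\sigma,p_T) = \bar W - U_S(\sigma,p_T) > \bar W - U_S(\tilde\sigma,p_T) \ge U_P(\tilde\sigma,p_T)$, and symmetrically the second Pareto condition holds, so $\sigma$ is Pareto-efficient.

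The only point beyond routine bookkeeping is the equality-case analysis in the identity above, namely the observation that $p^*(x) \ge \frac{\mu}{1-\mu}\theta^x_{min}$ for every posterior $x$; this is precisely what makes ``total welfare is maximal'' \emph{coincide} with ``lowest-type-targeting'' rather than merely being implied by it. Once that is established, the forward direction is a one-line appeal to Lemma~\ref{lem7} and the reverse direction is the welfare-frontier argument sketched above.
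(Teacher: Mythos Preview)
Your proof is correct and follows essentially the same approach as the paper: both directions rest on Lemma~\ref{lem7} (for ``Pareto-efficient $\Rightarrow$ lowest-type-targeting'') and on the total-welfare bound $U_S(\sigma,p_T)+U_P(\sigma,p_T)\le \mu(1+\sum_j x^*_j\theta_j)$, with equality attained by lowest-type-targeting policies (for the converse). The only cosmetic differences are that the paper cites Lemma~\ref{lem6} for the equality value and phrases the converse as a contradiction, whereas you derive the explicit identity with the nonnegative error term and argue directly; your extra equality-case analysis (that $p^*(x)\ge\frac{\mu}{1-\mu}\theta^x_{\min}$, so equality in the bound characterizes lowest-type-targeting exactly) is correct but not actually needed for the Pareto-efficiency conclusion.
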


The following corollary then follows directly from Lemma \ref{lem8} and Lemma \ref{lem9}:

\begin{corollary}
\label{cor6}
    Denote by $\Sigma_P$ the set of all Pareto-efficient platform policies. Then, the set of all Pareto-efficient optimal platform policies $\Sigma^* \cap \Sigma_P$ is nonempty.
\end{corollary}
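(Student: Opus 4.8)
The plan is to chain the two results that immediately precede the statement, since no new construction is required. First I would invoke Lemma \ref{lem8} to obtain a specific policy $\sigma$ that is simultaneously optimal, i.e.\ $\sigma \in \Sigma^*$, and lowest-type-targeting. This is the only genuine existence step, and it has already been carried out (it rests on Lemma \ref{lem7}, which lets one refine any non-lowest-type-targeting policy without hurting either party's utility, together with Corollary \ref{cor5}, which preserves incentive compatibility under such refinements).

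Next I would apply the forward direction of Lemma \ref{lem9} to that same $\sigma$: being lowest-type-targeting, it is Pareto-efficient, so $\sigma \in \Sigma_P$. Combining the two memberships gives $\sigma \in \Sigma^* \cap \Sigma_P$, which is therefore nonempty. No additional estimate, limiting argument, or case analysis is needed — the proof is two or three lines.

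The only point I would be careful about, rather than a real obstacle, is a mild asymmetry in the quantifiers: $\Sigma^*$ is defined as the set of policies maximizing the platform's utility \emph{among incentive-compatible policies}, whereas Pareto-efficiency in Lemma \ref{lem9} ranges over all of $\Sigma$. Since $\Sigma^* \subseteq \Sigma_{IC} \subseteq \Sigma$, the policy produced by Lemma \ref{lem8} lies in the domain of both notions, so this causes no friction and the statement goes through verbatim. I do not anticipate any hard part in this corollary; all the work is upstream in Lemmas \ref{lem7}, \ref{lem8}, and \ref{lem9}.
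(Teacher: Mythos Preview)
Your proposal is correct and matches the paper's own argument exactly: the paper states that the corollary ``follows directly from Lemma \ref{lem8} and Lemma \ref{lem9},'' which is precisely the two-step chain you describe. Your remark about the quantifier asymmetry is a valid sanity check but, as you note, causes no trouble.
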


We now turn to prove Theorem \ref{thm3}, which provides a characterization of the optimal platform policy:

\begin{proof}[\textbf{Proof of Theorem \ref{thm3}.}]
Assume by contradiction that this is not the case. That is, for any $\sigma \in \Sigma^*$ we have $U_S(\sigma_F, p^*) > U_S(p^*; x^F)$. Denote by $\Sigma_{P}$ the set of all Pareto-efficient platform policies. From Corollary \ref{cor6}, the set $\tilde{\Sigma} \coloneqq \Sigma^* \cap \Sigma_P$ is not empty.

Let $\sigma \in \argmin_{\sigma' \in \tilde{\Sigma}} {\alpha'^T}$. From the assumption we know that $U_S(\sigma_F, p^*) > U_S(p^*; x^F)$. We will show that there exists $\hat{\sigma} \in \tilde{\Sigma}$ such that $\hat{\alpha}^T < \alpha^T$, in contradiction to the minimality of $\alpha^T$.

For any $0 < \epsilon < 1$, let us define the following set of platform policies: 
\begin{align*}
    \Sigma_\epsilon = \set[\Big]{\tilde{\sigma} \in \Sigma: \tilde{\alpha}^T = (1-\epsilon) \alpha^T \wedge \tilde{x}^T = x^T}
\end{align*}

and note that for every $\sigma_\epsilon \in \Sigma_\epsilon$, $x^F_\epsilon = \E_{x \sim \sigma^F_\epsilon} [x] = \E_{x \sim \sigma_\epsilon} [x | x \neq x^T]$ is uniquely defined. From Lemma \ref{lem4} there exist $0 < \epsilon^* < 1$ and $\sigma_l \in \Sigma_{\epsilon^*}$ such that $V(\sigma_l) < V(\sigma)$. From Lemma \ref{lem5}, for the same $\epsilon^*$ there exists $\sigma_h \in \Sigma_{\epsilon^*}$ such that $V(\sigma_h) > V(\sigma)$. $V$ is a continuous function and $\Sigma_{\epsilon^*}$ is a convex set, therefore one can take a convex combination of the two policies and construct a new policy $\tilde{\sigma} \in \Sigma_{\epsilon^*}$ such that $V(\tilde{\sigma})=V(\sigma)$. From Corollary \ref{cor5}, $\tilde{\sigma}$ is incentive-compatible (as $\sigma$ is incentive compatible, $\tilde{x}^T = x^T$ and $V(\tilde{\sigma})=V(\sigma)$). It is now left to show that $\tilde{\sigma}$ provides the same average user utility as $\sigma$ does.

First, we recall that \cite{bergemann2015} showed that for every segmentation $\tilde{\sigma}_F \in \Sigma$ there exists a Pareto-efficient segmentation $\hat{\sigma}_F \in \Sigma$ such that $W_S(\tilde{\sigma}_F, \phi^*) = W_S(\hat{\sigma}_F, \phi^*)$. Using their claim it can be shown that there exists a platform policy $\hat{\sigma} \in \Sigma_{\epsilon^*}$ such that $U_S(\tilde{\sigma}_F, p^*) = U_S(\hat{\sigma}_F, p^*)$, and $\hat{\sigma}$ is Pareto-efficient. Note that $\hat{\sigma}$ is still incentive-compatible from the same arguments concerning $\tilde{\sigma}$.

Using the fact that both $\sigma$ and $\hat{\sigma}$ are Pareto-efficient, we get that:
\begin{align*}
    V(\hat{\sigma}) = V(\sigma) \Leftrightarrow U_P(\hat{\sigma}, p_T) = U_P(\sigma, p_T)
\end{align*}

therefore, $\hat{\sigma} \in \tilde{\Sigma}$ and $\hat{\alpha}^T = (1-\epsilon^*)\alpha^T < \alpha^T$, in contradiction to the fact that $\sigma \in \argmin_{\sigma' \in \tilde{\Sigma}} {\alpha'^T}$.

\end{proof}

\subsection{A Numerical Example: Binary User Type}
\label{subsec:rbpp-numerical}

In this section, we provide an example of how Corollary \ref{cor4} enables a numerical solution for the reputation-based persuasion platforms problem, and yields several economic insights and implications. The analysis focuses on the first stage of the two-stage procedure, in which the platform splits the prior user distribution into two posteriors $x^T$ (in which it requires truthful behavior) and $x^F$ (in which the sender is allowed to play greedily).
We first note that even in the binary user type case (namely $n=2$), the problem is hard to solve analytically. This is due to the fact that the problem is a nested optimization problem (as the objective is a maximum function itself), and the search is performed over a non-convex set, defined by both Bayes-plausibility constraints and performance constraints in the form of lower bounds on the objective itself (where the bounds depend on the decision variables themselves).

However, we note that in low-dimension, an approximation of an optimal solution can be obtained using an exhaustive search over the simplex. We utilize this fact to approximate the sender's utility, as well as the distribution of users on which it acts truthfully. Notice that both of these don't change during the second stage of the two-stage procedure, hence the first stage is sufficient.

For this example, we fix $\mu=\frac{1}{2}, \theta_1 = \frac{1}{10}, \theta_2 = \frac{7}{10}$ and $\bar{u}=0.1$ (which is indeed lower than the monopolistic utility of the sender). We consider three different cases of the prior user distribution: harder-to-persuade users $\theta_1$ are less likely (30\%), equality likely (50\%), and more likely (70\%). For each of these prior distributions, we study the effect of the sender's discount factor $\delta$ on both the sender utility and the likelihood of a user being a harder-to-persuade user conditional on truth-requiring of the platform (i.e., $x_{1}^T$). We treat the structure of the truthful posterior as a measure of \emph{fairness}, as it measures which type of user benefits the most from the platform's truth-requiring strategy.
For every prior distribution, we restrict attention to the range of discount factors in which the solution is non-trivial. The results are summarized in Figure \ref{fig:experiment-meta-figure}.\footnote{Alternatively, for a fixed discount factor $\delta$, decreasing $\bar{u}$ has a similar effect as increasing $\delta$ for a fixed $\bar{u}$, as both operations translate into an increase in the platform punishment capabilities.}

\begin{figure}[htp]
    \centering
    \subfloat[Sender's utility as a function of $\delta$ for $x^*=(0.3,0.7)$]{\includegraphics[width=0.4\textwidth]{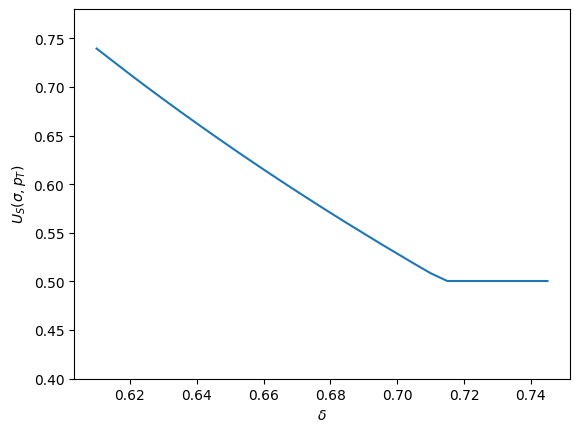}\label{fig:u30}}
    \hfill
    \subfloat[Posterior $x_1^T$ as a function of $\delta$ for $x^*=(0.3,0.7)$]{\includegraphics[width=0.4\textwidth]{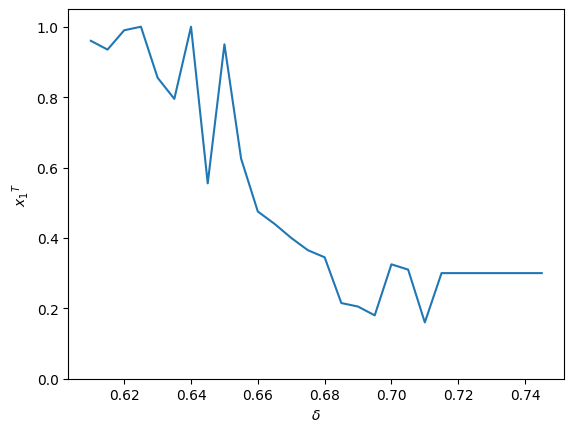}\label{fig:x30}}
    
    \subfloat[Sender's utility as a function of $\delta$ for $x^*=(0.5,0.5)$]{\includegraphics[width=0.4\textwidth]{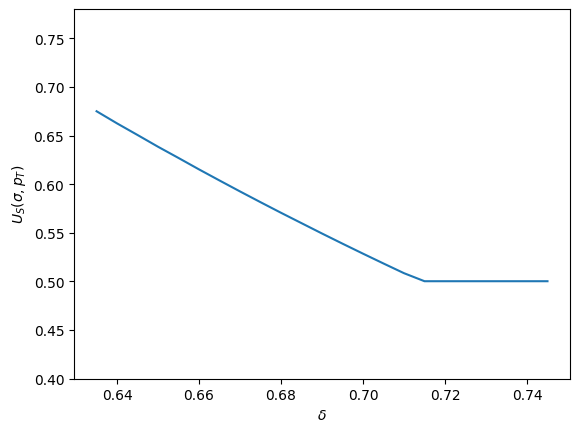}\label{fig:u50}}
    \hfill
    \subfloat[Posterior $x_1^T$ as a function of $\delta$ for $x^*=(0.5,0.5)$]{\includegraphics[width=0.4\textwidth]{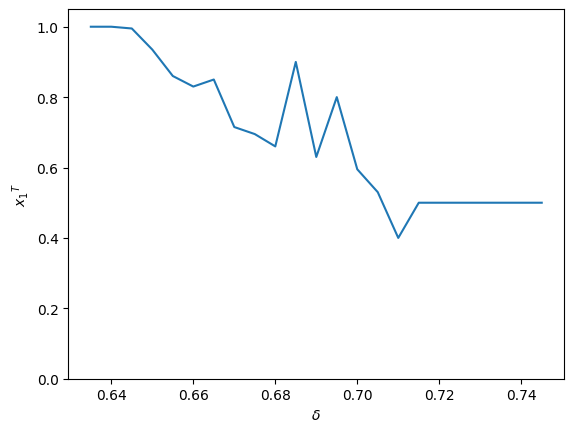}\label{fig:x50}}
    
    \subfloat[Sender's utility as a function of $\delta$ for $x^*=(0.7,0.3)$]{\includegraphics[width=0.4\textwidth]{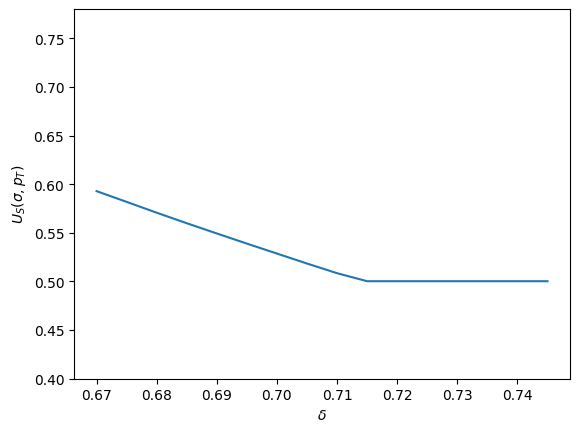}\label{fig:u70}}
    \hfill
    \subfloat[Posterior $x_1^T$ as a function of $\delta$ for $x^*=(0.7,0.3)$]{\includegraphics[width=0.4\textwidth]{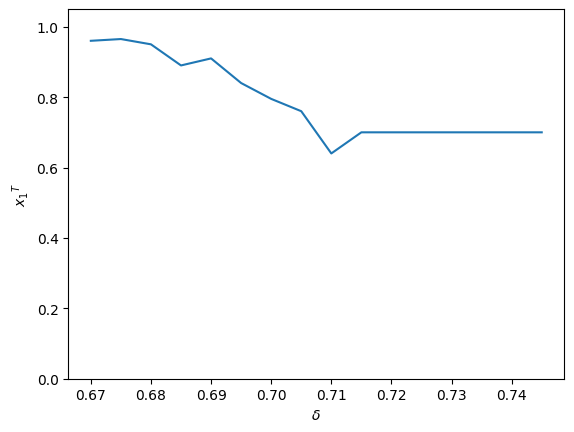}\label{fig:x70}}
    
    \caption{Utility of the sender and truthful posterior as functions of $\delta$ for different priors. Each row represents a different prior (30\%, 50\%, and 70\%), with the left column showing the sender's utility and the right column displaying the truthful posterior.}
    \label{fig:experiment-meta-figure}
\end{figure}

\paragraph{Utility of the sender}
It is notable that for all prior user distributions, the sender's utility decreases as the discount factor increases. 
This outcome is expected: when the sender is more careful about future timesteps, the platform gains more power to penalize, using this ability to lower the sender's utility subject to incentive-compatibility.
For a large enough $\delta$ it becomes incentive-compatible for the platform to require truthful behavior over the entire posterior with probability 1, leading to a profit of $\mu = \frac{1}{2}$ for the sender (as it only sells the product when its quality is high).

\paragraph{Posterior on which the sender acts truthfully}  When $\delta$ is large enough and complete truth-requiring becomes incentive-compatible, it is clear that $x^T = x^*$. Interestingly, as $\delta$ decreases, it appears that the distribution $x^T$ assigns more mass to the harder-to-persuade users, leading to a more biased (hence less fair) distribution of users benefiting from the platform's optimal strategy.
While this monotonicity trend holds only globally rather than locally, we highlight that fluctuations are caused due to the use of the exhaustive search method (which only finds an approximate solution). 

\section{Discussion}
\label{sec:discussion}

In this paper, we introduced a bi-level Bayesian persuasion model in which a third-party platform controls the information available to the sender about user preferences. We characterized the optimal information disclosure policy of the platform, which maximizes average user utility, in a subgame-perfect Bayesian equilibrium. This is done by reducing the persuasion problem into a market segmentation problem. 

While our basic model assumes binary action and state space, we show that a similar approach can solve a more general class of product adoption two-stage persuasion problems, with an arbitrary state space. We provide sufficient conditions on the structure of the problem (namely, on the receiver utilities) that guarantee the existence of a reduction to the market segmentation problem that yields an optimal solution for the persuasion platforms problem. While this approach turns out to be useful even in general product adoption problems (namely, when the receiver's actions are "adopt" and "opt-out"), characterizing the solution of a more general two-stage persuasion problem (i.e., more than two actions for the receiver), is out of scope for this paper and left as an interesting open question.

Following the one-shot case, we introduced the reputation-based persuasion platform problem in which myopic users arrive sequentially and analyzed the equilibrium behavior of the platform and sender in this setting, and simplified the optimization problem of the platform to a finite dimension optimization problem. However, a complete solution to the platform's optimization problem in the reputation-based setting is left for future work.

We provide a numerical example in which the sender's utility and the distribution of users on which the sender is forced to act truthfully are approximated using an exhaustive search, which is tractable in the binary user type case. Our simulations suggest that the platform's optimal strategy favors users who are "harder" for the seller in the first place. This imbalance is more prevalent as the sender is less patient. That is, while this reputation mechanism indeed benefits the users on average, not all users benefit equally. This insight calls for further research quantifying and controlling the degree of fairness within this reputation-based framework.

Overall, our results suggest that introducing a platform to control the information available to the sender can incentivize it to take a more truthful strategy and protect users from being recommended low-quality products. These findings have potential applications in economic situations in which the seller is provided with information regarding user preferences by some third-party entity, which is interested in maximizing users' welfare. Importantly, several aspects of some real-life applications of reputation systems are out of the scope of this work. This includes, for instance, the consideration of long-living users (\citealp{romanyuk2019cream,lorecchio2023bad}) and the effect of reputation on social learning (\citealp{che2018recommender,acemoglu2022learning}). Such endeavors could complement our findings, leading to a richer understanding of how reputation and recommendation systems operate and evolve.

\newpage
\bibliographystyle{plainnat}
\bibliography{sample}

\newpage
\appendix
\section{Omitted Proofs}

\subsection{Proof of Lemma \ref{lem2}}

\begin{proof}
    Given a platform policy $\sigma$, note that a deviation from $p_T$ can occur only conditional on $x=x^T$. It is clear that such deviation should be of the form $p(x^T) = p_k$ for some $k \in [n]$. Playing $p_T$ always leaves the sender at the high reputation state, hence its expected utility from the current timestep is $\mu$ and the expected utility from all future timesteps is $V(\sigma)$. 
    
    On the other hand, while deviating to $p_k$ the expected utility from the current timestep is $F_k(x^T)(\mu + (1-\mu)p_k)$ and the expected utility from all future timesteps is $\bar{u}$ with probability $(1-\mu)F_k(x^T)p_k$ (which corresponds to the product being of low quality, the sender decides to recommend purchasing it and the drawn user has persuasion threshold above $p_k$), and $V(\sigma)$ otherwise. 
    
    Therefore, the following inequality reflects the fact that deviation from $p_T$ to $p_k$ conditional on $x = x^T$ is not beneficial for the sender:

    \begingroup
    \allowdisplaybreaks
    \begin{align}
    (1-\delta)\mu +\delta V(\sigma) &\ge (1-\delta)F_k(x^T)(\mu + (1-\mu)p_k)  \nonumber \\
    &\quad+ \delta (1-\mu) F_k(x^T) p_k \bar{u} + \delta (1 - (1-\mu) F_k(x^T) p_k) V(\sigma) \Leftrightarrow  \nonumber \\
    \delta (1-\mu) F_k(x^T) p_k V(\sigma) &\ge (1-\delta)(F_k(x^T) \mu - \mu + F_k(x^T)(1-\mu)p_k) \nonumber \\
    &\quad+ \delta (1-\mu) F_k(x^T) p_k \bar{u} \Leftrightarrow \nonumber \\
    V(\sigma) &\ge \frac{(1-\delta)(F_k(x^T) \mu - \mu + F_k(x^T)(1-\mu)p_k) + \delta (1-\mu) F_k(x^T) p_k \bar{u}}{\delta (1-\mu) F_k(x^T) p_k} \nonumber \\
    &= \frac{(1-\delta)(F_k(x^T) \mu - \mu + F_k(x^T)(1-\mu)p_k)}{\delta (1-\mu) F_k(x^T) p_k} + \bar{u} \nonumber \\
    &= \frac{1-\delta}{\delta} \cdot \left(\frac{\mu}{1-\mu} \cdot \frac{F_k(x^T) - 1}{F_k(x^T) p_k} + 1\right) + \bar{u} \nonumber \\
    &= \frac{1-\delta}{\delta} \cdot \left(1 - \frac{1 - F_k(x^T)}{F_k(x^T) \theta_k}\right) + \bar{u} \nonumber
    \end{align}
    \endgroup

    When the above inequality holds for every $k \in [n]$, $p_T$ is indeed a best-response of the sender to the platform policy $\sigma$.
\end{proof}

\subsection{Proof of Lemma \ref{lem3}}

\begin{proof}
    Let $\sigma_\epsilon \in \Sigma_\epsilon$. From Bayes-plausibility of $\sigma$ and $\sigma_\epsilon$ we get:
    \begin{gather*}
        x^* = \alpha^T x^T + (1-\alpha^T)x^F \\
        x^* = (1-\epsilon)\alpha^T x^T + (1-(1-\epsilon)\alpha^T)x^F_\epsilon
    \end{gather*}
    Combining the two equations and isolating $x^F_\epsilon$, we get:
    \begin{gather}
        (1-(1-\epsilon)\alpha^T)x^F_\epsilon = (\alpha^T - (1-\epsilon)\alpha^T) x^T + (1-\alpha^T)x^F = \epsilon \alpha^T x^T + (1-\alpha^T)x^F \nonumber \\
        x^F_\epsilon = \frac{\epsilon \alpha^T x^T + (1-\alpha^T)x^F}{1-(1-\epsilon)\alpha^T} \to \frac{(1-\alpha^T)x^F}{1-\alpha^T} = x^F \nonumber
    \end{gather}
    when $\epsilon \to 0$.
\end{proof}

\subsection{Proof of Lemma \ref{lem4}}

\begin{proof}
    For any $\epsilon > 0$, denote by $\sigma_\epsilon$ the unique platform policy that satisfies $\sigma_\epsilon \in \Sigma_\epsilon$ and $\sigma^F_\epsilon = BBM(x^F_\epsilon)$, where $BBM(x)$ is the policy obtained by applying the algorithm of \cite{bergemann2015} with respect to the prior user distribution $x$. 
    As shown in \cite{bergemann2015}, this policy satisfies
    \begin{gather*}
        (1)\ U_S(p^*; x^F_\epsilon)=U_S(\sigma^F_\epsilon, p^*)
    \end{gather*}
    We also know that,
    \begin{gather*}
        (2)\ U_S(p^*; x^F)<U_S(\sigma_F, p^*)
    \end{gather*}
    From the assumption on $\sigma$. In addition, $U_S(p^*;x)$ is continuous in $x$, therefore from Lemma \ref{lem3} we know that,
    \begin{gather*}
        (3)\ U_S(p^*; x^F_\epsilon) \to U_S(p^*; x^F)
    \end{gather*}
    when taking $\epsilon \to 0$. Now, taking $\epsilon \to 0$ yields:
    \begin{gather*}
        V(\sigma_\epsilon) = (1-\epsilon)\alpha^T \mu + (1-(1-\epsilon)\alpha^T) U_S(\sigma^F_\epsilon, p^*) \underset{(1)}{=} \\
        (1-\epsilon)\alpha^T \mu + (1-(1-\epsilon)\alpha^T) U_S(p^*; x^F_\epsilon) \underset{(3)}{\to}
        \alpha^T \mu + (1-\alpha^T) U_S(p^*; x^F) \\ \underset{(2)}{<} \alpha^T \mu + (1-\alpha^T) U_S(\sigma_F, p^*) = V(\sigma)
    \end{gather*}
    Therefore, for small enough $\epsilon^* > 0$, the policy $\sigma_l = \sigma_{\epsilon^*}$ as defined above satisfies $V(\sigma_l) < V(\sigma)$.
\end{proof}

\subsection{Proof of Lemma \ref{lem5}}

\begin{proof}
    Let $0 < \epsilon < 1$. Define $\sigma'$ and $\sigma''$ as follows:
    \begingroup
    \allowdisplaybreaks
    \begin{gather*}
    \operatorname{supp}(\sigma') = \{ x'^1, ... x'^n, x'^T \} \\
    x'^T = x^T, \alpha'^T = \alpha^T \\
    \forall i, j \in [n]: x'^i_j = \ind{i=j}, \alpha'^i=(1-\alpha^T)x^F_i \\
    \operatorname{supp}(\sigma'') = \{ x''^1, ... x''^n, x''^T \} \\
    x''^T = x^T, \alpha''^T = (1-\epsilon)\alpha^T \\
    \forall i, j \in [n]: x''^i_j = \ind{i=j}, \alpha''^i=(1-\alpha^T+\epsilon \alpha^T)x^F_{\epsilon,i}
    \end{gather*}
    \endgroup
    
    By construction, $\sigma' \in \Sigma$ and $\sigma'' \in \Sigma_{\epsilon}$. First, let us show that $V(\sigma') < V(\sigma'')$. Start by defining:
    \begin{gather*}
        \lambda \coloneqq \frac{\epsilon \alpha^T}{1 - \alpha^T + \epsilon \alpha^T} \\
        \forall j \in [n]: u_j \coloneqq \mu + (1-\mu) p_j \\
        W_F \coloneqq \sum_{j=1}^n x^F_j u_j, W_T \coloneqq \sum_{j=1}^n x^T_j u_j
    \end{gather*}
    Notice that since $\sigma'' \in \Sigma_{\epsilon}$, $x''^F = x^F_\epsilon = \lambda x^T + (1-\lambda) x^F$. It is straightforward to show that $U_S(\sigma'_F,p^*) = W_F$. We next express $U_S(\sigma''_F,p^*)$ in terms of $\lambda, W_F$ and $W_T$:
    \begin{align*}
        U_S(\sigma''_F,p^*) &= \sum_{j=1}^n x^F_{\epsilon,j} u_j = \sum_{j=1}^n (\lambda x^T + (1-\lambda) x^F)_j u_j \\
        &= \lambda \sum_{j=1}^n x^T_j u_j + (1-\lambda)\sum_{j=1}^n x^F_j u_j = \lambda W_T + (1-\lambda) W_F
    \end{align*}
    Now,

    \begingroup
    \allowdisplaybreaks
    \begin{align*}
        V(\sigma'') > V(\sigma') &\Leftrightarrow 
        \alpha''^T \mu + (1-\alpha''^T) U_S(\sigma''_F, p^*) > \alpha'^T \mu + (1-\alpha'^T) U_S(\sigma'_F, p^*) \\
        &\Leftrightarrow (1-\epsilon)\alpha^T \mu + (1-(1-\epsilon)\alpha^T) U_S(\sigma''_F, p^*) > \alpha^T \mu + (1-\alpha^T) U_S(\sigma'_F, p^*) \\
        &\Leftrightarrow (1-\alpha^T+\epsilon\alpha^T) U_S(\sigma''_F, p^*) > \epsilon \alpha^T \mu + (1-\alpha^T) U_S(\sigma'_F, p^*) \\
        &\Leftrightarrow (1-\alpha^T+\epsilon\alpha^T) (\lambda W_T + (1-\lambda) W_F) > \epsilon \alpha^T \mu + (1-\alpha^T) W_F \\
        &\Leftrightarrow \epsilon \alpha^T W_T + (1-\alpha^T) W_F > \epsilon \alpha^T \mu + (1-\alpha^T) W_F
    \end{align*}
    \endgroup
    
    which indeed holds because $W_T > \mu$. Next, notice that $V(\sigma) \le V(\sigma')$ follows directly from \cite{blackwell}, as $\sigma_F$ is a refinement of $\sigma'_F$. Therefore, we overall get $V(\sigma) \le V(\sigma') < V(\sigma'')$, so setting $\sigma_h = \sigma''$ completes the proof.
\end{proof}

\subsection{Proof of Lemma \ref{lem6}}

\begin{proof}
    First, the fact that $\sigma$ is lowest-type-targeting implies that $I_j(x^i,p^*) = x_j^i$ for any $i \in [m], j \in [n]$ (since $x_j^i > 0$ implies $\ind{p^*(x^i) \le p_j} = 1$). Now, we get that:
    \begingroup
    \allowdisplaybreaks
    \begin{align*}
        U_S(\sigma_F, p^*) &= \sum_{i=1}^m \sigma_F(x^i) \sum_{j=1}^n I_j(x^i,p^*) (\mu + (1-\mu)p^*(x^i)) \\ &= \sum_{i=1}^m \frac{\alpha^i}{1-\alpha^T} \sum_{j=1}^n x^i_j (\mu + (1-\mu)p^*(x^i)) \\& =
        \frac{1}{1-\alpha^T} \sum_{i=1}^m \alpha^i (\mu + (1-\mu)p^*(x^i)) \sum_{j=1}^n x^i_j \\&= \frac{1}{1-\alpha^T} \sum_{i=1}^m \alpha^i (\mu + (1-\mu)p^*(x^i)) \\&= 
        \mu \frac{1}{1-\alpha^T} \sum_{i=1}^m \alpha^i + (1-\mu) \frac{1}{1-\alpha^T} \sum_{i=1}^m \alpha^i p^*(x^i) \\&= \mu + \frac{1-\mu}{1-\alpha^T} \sum_{i=1}^m \alpha^i p^*(x^i) \\
        U_P(\sigma_F, p^*) &= \sum_{i=1}^m \sigma_F(x^i) \sum_{j=1}^n I_j(x^i,p^*) (\mu\theta_j - (1-\mu)p^*(x^i)) \\&= \sum_{i=1}^m \frac{\alpha^i}{1-\alpha^T} \sum_{j=1}^n x^i_j (\mu\theta_j - (1-\mu)p^*(x^i)) \\&= 
        \frac{1}{1-\alpha^T} \sum_{i=1}^m \alpha^i \sum_{j=1}^n x^i_j (\mu\theta_j - (1-\mu)p^*(x^i)) \\ &= \frac{\mu}{1-\alpha^T} \sum_{i=1}^m \alpha^i (\sum_{j=1}^n x^i_j \theta_j) - \frac{1-\mu}{1-\alpha^T} \sum_{i=1}^m \alpha^i p^*(x^i) \sum_{j=1}^n x^i_j \\ &=
        \frac{\mu}{1-\alpha^T} \sum_{i=1}^m \alpha^i (\sum_{j=1}^n x^i_j \theta_j) - \frac{1-\mu}{1-\alpha^T} \sum_{i=1}^m \alpha^i p^*(x^i)
    \end{align*}
    \endgroup

    Now, plugging into $U_S(\sigma, p_T)$ and $U_P(\sigma, p_T)$, we get:

    \begingroup
    \allowdisplaybreaks
    \begin{align*}
        U_S(\sigma, p_T) &= \alpha^T \mu + (1-\alpha^T) U_S(\sigma_F,p^*) \\&= \alpha^T \mu + (1-\alpha^T) \mu + (1-\alpha^T) \frac{1-\mu}{1-\alpha^T} \sum_{i=1}^m \alpha^i p^*(x^i) \\&= \mu + (1-\mu) \sum_{i=1}^m \alpha^i p^*(x^i) \\
        U_P(\sigma, p_T) &= \alpha^T \mu (\sum_{j=1}^n x^T_j \theta_j) + (1-\alpha^T) U_P(\sigma_F,p^*) \\&= \alpha^T \mu (\sum_{j=1}^n x^T_j \theta_j) + (1-\alpha^T) \frac{\mu}{1-\alpha^T} \sum_{i=1}^m \alpha^i (\sum_{j=1}^n x^i_j \theta_j) - (1-\alpha^T) \frac{1-\mu}{1-\alpha^T} \sum_{i=1}^m \alpha^i p^*(x^i) \\&=
        \mu (\alpha^T \sum_{j=1}^n x^T_j \theta_j + \sum_{i=1}^m \alpha^i (\sum_{j=1}^n x^i_j \theta_j)) - (1-\mu) \sum_{i=1}^m \alpha^i p^*(x^i) \\ &= 
        \mu \sum_{j=1}^n x^*_j \theta_j - (1-\mu) \sum_{i=1}^m \alpha^i p^*(x^i)
    \end{align*}
    \endgroup
    where the last equality comes from Bayes-plausibility.
\end{proof}

\subsection{Proof of Lemma \ref{lem7}}

\begin{proof}
    Let $\sigma \in \Sigma$ be some platform policy which is not lowest-type-targeting, i.e., there exists $x \in \operatorname{supp}(\sigma_F)$ such that $p^*(x) = \frac{\mu}{1-\mu} \theta^x_{opt} > \frac{\mu}{1-\mu} \theta^x_{min}$. Denote by $i$ and $j$ the indices satisfying $\theta_i = \theta^x_{min}$ and $\theta_j = \theta^x_{opt}$, and notice that $i<j$. Now, define a new policy $\tilde{\sigma}$ which is similar to $\sigma$, except $x$ is decomposed into $y$ and $z$ as follows:

    \begin{align*}
        y &= \frac{1}{F_j(x)} (0, ... 0, x_j, x_{j+1}, ... x_n) \\
        z &= \frac{1}{1-F_j(x)} (x_1, ... x_{j-1}, 0, ... 0) \\
        \tilde{\sigma}(y) &= F_j(x) \sigma(x), \tilde{\sigma}(z) = (1 - F_j(x)) \sigma(x)
    \end{align*}

    Notice that $\tilde{\sigma}$ is Bayes-plausible by construction. Clearly $x^T = \tilde{x}^T$, $\alpha^T = \tilde{\alpha}^T$. Now, notice that $p^*(z) < p^*(x)$ (as $\theta^x_{opt} > \max \operatorname{supp}(z)$) and $p^*(y) = p^*(x)$:

    \begingroup
    \allowdisplaybreaks
    \begin{align*}
    j &= \argmax_{l=1, ... n} \sum_{k=l}^n x_k (\mu + (1-\mu)p_l) =
    \argmax_{l=j, ... n} \sum_{k=l}^n x_k (\mu + (1-\mu)p_l) \\&= 
    \argmax_{l=j, ... n} \frac{1}{F_j(x)} \sum_{k=l}^n x_k (\mu + (1-\mu)p_l) = 
    \argmax_{l=j, ... n} \sum_{k=l}^n y_k (\mu + (1-\mu)p_l) = j'
    \end{align*}
    \endgroup

    where $j'$ is the index for which $p^*(y) = \frac{\mu}{1-\mu} \theta_{j'}$. 
    
    Now, notice that users' utility weakly increases in $\tilde{\sigma}$ with respect to $\sigma$, since users moved from $x$ to $z$ can only benefit while users moved from $x$ to $y$ remain with the same utility. Therefore, $U_P(\sigma, p_T) \le U_P(\tilde{\sigma}, p_T)$. As for the sender's utility, notice that:

    \begingroup
    \allowdisplaybreaks
    \begin{align*}
    &U_S(\tilde{\sigma},p_T) > U_S(\sigma,p_T) \Leftrightarrow U_S(\tilde{\sigma}_F,p^*) > U_S(\sigma_F,p^*) \Leftrightarrow \\
    &\begin{aligned}
    &\tilde{\sigma}(y) \sum_{k=1}^n y_k \ind{p^*(y) \le p_k} (\mu + (1-\mu)p^*(y)) + \tilde{\sigma}(z) \sum_{k=1}^n z_k \ind{p^*(z) \le p_k} (\mu + (1-\mu)p^*(z)) \\
    &> \sigma(x) \sum_{k=1}^n x_k \ind{p^*(x) \le p_k} (\mu + (1-\mu)p^*(x)) \Leftrightarrow
    \end{aligned} \\
    &\begin{aligned}
    &F_j(x) \sum_{k=1}^n y_k \ind{p^*(y) \le p_k} (\mu + (1-\mu)p^*(y)) + (1-F_j(x)) \sum_{k=1}^n z_k \ind{p^*(z) \le p_k} (\mu + (1-\mu)p^*(z)) \\
    &> \sum_{k=1}^n x_k \ind{p^*(x) \le p_k} (\mu + (1-\mu)p^*(x)) \Leftrightarrow
    \end{aligned} \\
    &\begin{aligned}
    &\sum_{k=j}^n x_k \ind{p^*(x) \le p_k} (\mu + (1-\mu)p^*(x)) + (1-F_j(x)) \sum_{k=1}^n z_k \ind{p^*(z) \le p_k} (\mu + (1-\mu)p^*(z)) \\
    &> \sum_{k=1}^n x_k \ind{p^*(x) \le p_k} (\mu + (1-\mu)p^*(x)) \Leftrightarrow
    \end{aligned} \\
    &\begin{aligned}
    &(1-F_j(x)) \sum_{k=1}^n z_k \ind{p^*(z) \le p_k} (\mu + (1-\mu)p^*(z)) 
    > \sum_{k=1}^{j-1} x_k \ind{p^*(x) \le p_k} (\mu + (1-\mu)p^*(x)) \Leftrightarrow
    \end{aligned} \\
    &\begin{aligned}
    &\sum_{k=1}^{j-1} x_k \ind{p^*(z) \le p_k} (\mu + (1-\mu)p^*(z))
    > \sum_{k=1}^{j-1} x_k \ind{p^*(x) \le p_k} (\mu + (1-\mu)p^*(x))
    \end{aligned}
    \end{align*}
    \endgroup
    
Now, notice that the left-hand side is strictly greater than zero since $p^*(z)$ is the optimal sender strategy at segment $z$, and $z = (x_1, ... x_{j-1}, 0 ... 0)$ up to the constant $(1-F_j(x))$. On the other hand, the right-hand side equals zero since for all $k \in [j-1]: p^*(x) > p_k$. Therefore the condition holds and we get $U_S(\tilde{\sigma},p_T) > U_S(\sigma,p_T)$.
    
\end{proof}

\subsection{Proof of Lemma \ref{lem8}}

\begin{proof}
    Let $\sigma \in \Sigma^*$ be some optimal platform policy. If $\sigma$ is lowest-type-targeting then we are done. Otherwise, from Lemma \ref{lem7} there exists a platform policy $\tilde{\sigma}$ such that $x^T = \tilde{x}^T$, $\alpha^T = \tilde{\alpha}^T$, $U_P(\tilde{\sigma},p_T) \ge U_P(\sigma,p_T)$ and $U_S(\tilde{\sigma},p_T) > U_S(\sigma,p_T)$. From Corollary \ref{cor5} combined with the fact that $\sigma$ is incentive-compatible, it follows that $\tilde{\sigma}$ is also incentive-compatible. Therefore it follows that $U_P(\tilde{\sigma},p_T) = U_P(\sigma,p_T)$, otherwise it contradicts the fact that $\sigma \in \Sigma^*$, hence  $\tilde{\sigma} \in \Sigma^*$. Now, if $\tilde{\sigma}$ is lowest-type-targeting then we're done, otherwise, we repeat the same process. Note that this process can only be repeated a finite number of times, therefore we must end up with an optimal lowest-type-targeting platform policy.
\end{proof}

\subsection{Proof of Lemma \ref{lem9}}

\begin{proof}
    Lemma \ref{lem7} implies that any platform policy which is not lowest-type-targeting is also not Pareto-efficient. To show the opposite direction, we now show that the sum of the sender and platform utility of any platform policy is maximal for lowest-type-targeting policies:
    From Lemma \ref{lem6}, the sum of utilities for any lowest-type-targeting $\tilde{\sigma}$ is:
    \begin{equation*}
        U_P(\tilde{\sigma}, p_T) + U_S(\tilde{\sigma}, p_T) = \mu (1 + \sum_{j=1}^n x^*_j \theta_j)
    \end{equation*}
    
    And for any $\sigma \in \Sigma$ the sum of utilities is:
    \begin{align*}
        U_P(\sigma, p_T) + U_S(\sigma, p_T) &= \alpha^T \mu (1 + \sum_{j=1}^n x^T_j \theta_j) + \sum_{i=1}^m \alpha^i (\sum_{j=1}^n I_j(x^i, p^*) (1 + \theta_j)) \\
        &\le \alpha^T \mu (1 + \sum_{j=1}^n x^T_j \theta_j) + \sum_{i=1}^m \alpha^i (\sum_{j=1}^n (1 + \theta_j)) \\
        &= \mu + \alpha^T \mu (\sum_{j=1}^n x^T_j \theta_j) + \sum_{i=1}^m \alpha^i (\sum_{j=1}^n (\theta_j)) = \mu (1 + \sum_{j=1}^n x^*_j \theta_j)
    \end{align*}
    where the last equality comes from Bayes-plausibility. Now, assume by contradiction that there exists a lowest-type-targeting policy that is not Pareto-efficient. Then there exists some other policy such that the sum of utilities is strictly larger, in contradiction to the claim we just proved.
\end{proof}

\end{document}